\newtheorem{theorem}{Theorem}
\newtheorem{corollary}[theorem]{Corollary}
\newtheorem{lemma}[theorem]{Lemma}
\newtheorem{proposition}[theorem]{Proposition}
\theoremstyle{remark}
\newtheorem*{remark}{Remark}
\newcommand\myeq[1]{\stackrel{\normalfont{#1}}{=}}
\newcommand{\Real}{\mathbb{R}}
\newcommand{\Complex}{\mathbb{C}}
\newcommand{\ee}{\mathbb{E}}
\newcommand{\unit}{\mathbb{I}}
\newcommand{\mani}{\mathcal{M}}
\newcommand{\secref}[1]{Section~\ref{#1}}
\newcommand{\phase}{P}
\newcommand{\norm}[1]{\lVert#1\rVert}
\newcommand{\comm}[2]{[#1,#2]}
\newcommand{\abs}[1]{\lvert#1\rvert}
\newcommand{\absbig}[1]{\big\lvert#1\big\rvert}
\newcommand{\absBig}[1]{\Big\lvert#1\Big\rvert}
\newcommand{\tr}{{\normalfont\text{tr}}}
\newcommand{\var}{{\normalfont\text{Var}}}
\newcommand{\numsample}{N_s}
\newcommand{\numtime}{N_t}
\newcommand{\obs}{O}
\newcommand{\loss}{\mathscr{L}}
\newcommand{\ud}{\mathrm{d}}
\newcommand{\NP}{\mathcal{J}}
\newcommand{\dt}{\Delta t}
\newcommand{\la}{\mathopen{\langle}}
\newcommand{\ra}{\mathclose{\rangle}}
\newcommand{\bra}[1]{\la #1\lvert}
\newcommand{\ket}[1]{\rvert #1\ra}
\newcommand{\innerbig}[2]{\big\langle #1,#2\big\rangle}
\newcommand{\inner}[2]{\la #1,#2\ra}
\newcommand{\inneravg}[2]{\la #1 \ra_{#2}}
\newcommand{\avgbraket}[2]{\inner{#2}{#1#2}}
\newcommand{\re}[1]{\mathfrak{Re}\{#1\}}
\newcommand{\rebig}[1]{\mathfrak{Re}\big\{#1\big\}}
\newcommand{\reBig}[1]{\mathfrak{Re}\Big\{#1\Big\}}
\newcommand{\im}[1]{\mathfrak{Im}\{#1\}}
\newcommand{\cstA}{\mathsf{G}}
\newcommand{\CA}{\mathsf{A}}
\newcommand{\CB}{\mathsf{B}}
\newcommand{\CC}{\mathsf{C}}
\newcommand{\CF}{\mathsf{F}}
\newcommand{\CL}{\mathsf{Y}}
\newcommand{\jumprate}{\lambda}
\newcommand{\figwidth}{1.0}
\newcommand{\appref}[1]{Appendix~\ref{#1}}
\begin{document}

\title{Dynamically Optimal Unraveling Schemes for Simulating Lindblad Equations}

\author{Yu Cao}
\email{yucao@sjtu.edu.cn}
\affiliation{Institute of Natural Sciences, Shanghai Jiao Tong University, Shanghai 200240, China}
\affiliation{School of Mathematical Sciences, Shanghai Jiao Tong University, Shanghai, 200240, China}
\affiliation{Ministry of Education Key Laboratory in Scientific and Engineering Computing, Shanghai Jiao Tong University, Shanghai 200240, China}

\author{Mingfeng He}
\affiliation{Zhiyuan College, Shanghai Jiao Tong University, Shanghai, 200240, China}

\author{Xiantao Li}
\affiliation{Department of Mathematics, Pennsylvania State University, University Park, PA 16802, USA}

\date{\today}

\begin{abstract}
Stochastic unraveling schemes are powerful computational tools for simulating Lindblad equations, offering significant reductions in memory requirements. However, this advantage is accompanied by increased stochastic uncertainty, and the question of optimal unraveling remains open. In this work, we investigate unraveling schemes driven by Brownian motion or Poisson processes and present a comprehensive parametric characterization of these approaches. For the case of a single Lindblad operator and one noise term, this parametric family provides a complete description for unraveling scheme with pathwise norm-preservation. We further analytically derive dynamically optimal quantum state diffusion (DO-QSD) and dynamically optimal quantum jump process (DO-QJP) that minimize the growth rate of the variance of an observable locally in time. Compared to jump process ansatz, DO-QSD offers two notable advantages: firstly, the variance for DO-QSD can be rigorously shown not to exceed that of any jump-process ansatz locally in time; secondly, it has very simple expressions. Numerical results demonstrate that the proposed DO-QSD scheme may achieve substantial reductions in the variance of observables and the resulting simulation error.
\end{abstract}

\keywords{
dynamically optimal, unraveling scheme, Lindblad equation
}

\maketitle

\section{Introduction}
The inevitable interactions of quantum systems  with their environment trigger distinctive quantum phenomena. A challenging task is to predict the system’s evolution without simulating the complete quantum configuration.  The Lindblad equation is a well-known framework that provides a complete description of completely-positive trace-preserving (CPTP) quantum semigroups \cite{lindblad_generators_1976,gorini_completely_1976}. Since its introduction, the Lindblad equation has been extensively studied across multiple disciplines, including quantum thermodynamics \cite{alicki_introduction_2018,spohn_entropy_1978,KastoryanoTemme2013}, quantum optics \cite{briegel_quantum_1993,carmichael_open_1993,breuer_theory_2007}, and quantum computing \cite{verstraete_quantum_2009}, among others. The general form of the Lindblad equation is:
\begin{equation}\label{eqn::lb}
\begin{aligned}
\dot{\rho}(t) &= \mathcal{L}\big(\rho(t)\big) = -i \comm{H}{\rho(t)} + \sum_{k=1}^K \left( L_k \rho(t) L_k^\dagger - \frac{1}{2} \big\{L_k^\dagger L_k, \rho(t)\big\} \right).
\end{aligned}
\end{equation}
Here, $\mathcal{L}$ is the generator of the Lindblad equation; $H$ is the effective Hamiltonian (including possible Lamb shift); $L_k$ describes the interaction between the system and the environment, and $K$ is the number of Lindblad operators; and $\rho(t)$ is the density matrix at time $t$. For an $n$-dimensional quantum system, $\rho(t)$ is an $n \times n$ positive-semidefinite matrix with trace one.

Due to its mathematical significance and wide range of applications, a fundamental problem is how to efficiently simulate the Lindblad equation. For finite-dimensional systems, the Lindblad equation becomes a matrix-valued ordinary differential equation (ODE), to which standard numerical schemes for ODEs can be applied. However, these solvers typically do not preserve the CPTP properties of the Lindblad equation. Recently, a family of structure-preserving schemes was proposed to simulate the Lindblad equation while maintaining positivity and trace \cite{cao_structure-preserving_2025}. When the dimension $n$ is large, such as in many-body quantum systems (where $n = 2^d$ and $d$ is the number of qubits), storing the quantum system on a classical computer becomes a significant challenge. An efficient approach is to use low-rank approximations of the density matrix. For general matrix-valued ODEs, dynamical low-rank approximation was studied by Koch and Lubich \cite{koch_dynamical_2007}. For the Lindblad equation specifically, a dynamically optimal low-rank approximation was investigated by Le Bris and Rouchon \cite{le_bris_low-rank_2013}, and the low-rankness was also pursued and studied by Appel\"o and Cheng in \cite{appelo2025kraus}.

When the low-rank approximation is not efficient (i.e., the rank of $\rho(t)$ is not very small), memory requirements may still pose a significant challenge. A widely used alternative is to stochastically unravel the Lindblad equation, that is, to find a stochastic Schr{\"o}dinger-type equation whose ensemble average recovers the original density matrix. This approach is closely related to the origin of the density matrix, which was introduced to describe the quantum behavior of a classical ensemble of pure states, often referred to as quantum trajectories.  There are two natural types of stochastic processes for this purpose: (a) Itô stochastic differential equations driven by Brownian motion, and (b) jump processes driven by Poisson processes. In the early 1990s, these two approaches were independently developed: Gisin and Percival proposed quantum state diffusion (QSD) \cite{gisin_quantum-state_1992}, while Dalibard, Castin, and Mølmer introduced the Monte Carlo wave-function (MCWF) method \cite{dalibard_wave-function_1992}, which will also be called the quantum jump process later in this paper. Since these seminal works, stochastic unraveling has become a standard tool for simulating high-dimensional Lindblad equations when memory is a limiting factor. Moreover, stochastic unraveling can be combined with low-rank methods; see, for example, \cite{le_bris_adaptive_2015, cao_stochastic_2018}.

Although the conventional QSD in \cite{gisin_quantum-state_1992} and the quantum jump process in \cite{dalibard_wave-function_1992} both lead to Lindblad equation \eqref{eqn::lb}, some opening questions still remain: is there a much larger class of unraveling dynamics that reproduces the same Lindblad equation? Which stochastic unraveling yields a better performance in terms of the statistical error?
To address these questions, we make the following contributions in this work:

\begin{itemize}

\item {\bf (Complete characterization):} For both diffusion-type and jump-type stochastic processes, we provide a comprehensive characterization of all possible such processes when there is only one Lindblad operator ($K = 1$) and one noise term. For the diffusion case, this problem was nearly solved in the original paper \cite{gisin_quantum-state_1992}, and the final structure that we obtain is similar to the conventional QSD in \cite{gisin_quantum-state_1992}; through mathematical derivation, we note that a phase factor in the diffusion coefficient (which also arise from the quadrature in homodyne measurement) can potentially significantly enhance the performance of stochastic unraveling in terms of variance reduction. See Theorem~\ref{lem::characterization} and Theorem~\ref{thm::qjp_characterize} for details.

\item {\bf (Explicitly solvable dynamically optimal unraveling):} For both diffusion-type and jump-type stochastic processes, we explicitly solve for the dynamically optimal stochastic unraveling by minimizing the rate of change of the variance locally in time. This approach is similar to dynamical low-rank approximation \cite{koch_dynamical_2007,le_bris_low-rank_2013, cao_stochastic_2018}, but instead of optimizing over the space of trajectories, we focus on the variance of a classical observable. See Theorem~\ref{thm::optimal_soln} and Theorem~\ref{thm::optimal_soln_jump} for details. In this work, we refer to the dynamically optimal stochastic processes with Brownian motion as dynamically optimal quantum state diffusion (DO-QSD), and the one with Poisson process as dynamically optimal quantum jump process (DO-QJP).

\item {\bf (Comparison of dynamically optimal unraveling schemes):} Notably, we observe that, with proper optimization, the variance of the dynamically optimal quantum state diffusion never exceeds that of any jump process locally in time. Details are provided in Theorem~\ref{thm::optimal_soln_jump}. Additionally, DO-QSD has a simpler expression compared to DO-QJP in the current form, offering both simplicity and superior performance in general. We need to remark that this discussion has not considered the numerical discretization error, as well as the need for physical interpretability of the stochastic process. Our comparison herein is purely based on the computational perspective, assuming that an accurate numerical solver is available.

\item {\bf (Numerical demonstration):} Numerically, we demonstrate that DO-QSD empirically achieves lower variance and consequently smaller error (when the time step is sufficiently small) compared to conventional QSD. Although DO-QSD is only \emph{locally optimal}, it often matches or outperforms schemes that optimize tunable parametric functions using machine learning. While it remains open whether DO-QSD minimizes variance at a fixed terminal time, its empirical performance indicates that DO-QSD is promising and efficient.

\end{itemize}

We remark that the conclusions in this work can be readily generalized to time-dependent Lindblad equations. For clarity of presentation and simplicity of notation, we focus exclusively on the time-independent case. It will become evident that, for our approach, there is no fundamental technical difference between the time-dependent and time-independent scenarios. Furthermore, the notion of optimality in unraveling considered here differs from that in \cite{vicentini_optimal_2019}, which primarily studied disordered open quantum systems. Besides, this notion also differs from \cite{vovk_entanglement-optimal_2022}, which aimed to find a dynamically optimal unraveling dynamics with minimum entanglement entropy. Our approach uses different mathematical technicality and different objectives, which complements the approach in \cite{vovk_entanglement-optimal_2022}. 

Another related area is the development of quantum algorithms for simulating Lindblad equations, e.g., \cite{cleve_2017,hu_quantum_2020,schlimgen_quantum_2021,schlimgen_quantum_2022,li_simulating_2023,ding_simulating_2024,chen_randomized_2024,di2024efficient,borras2025quantum}. In particular, unraveling the Lindblad equation to quantum trajectories has been the basis for some of the quantum algorithms \cite{ding_simulating_2024,di2024efficient,borras2025quantum}.  
Until fault-tolerant quantum computers become available, the simulation of Lindblad equations still mostly relies on classical computers. Moreover, noisy quantum circuits—particularly in analog quantum machines—are typically modeled by Lindblad equations. Thus, studying the effects of quantum noise will require efficient classical solvers for Lindblad equations in the current stage. Our results are therefore relevant and potentially useful, even in the context of quantum computation.

\bigskip

{\noindent \textbf{Notations.}} We mainly consider the finite-dimensional Lindblad equation so that the Hilbert space is simply $\Complex^n$ where $n$ is the dimension of the quantum system. The unit sphere is denoted as $\mathcal{S}^n = \big\{\psi \in \Complex^n:\ \norm{\psi}^2 = 1\big\}$. The inner product is defined as $\inner{\psi}{\phi} := \sum_{i=1}^{n} \psi_i^* \phi_i$, and $A^\dagger$ means the complex transpose of $A$. By default, we use column-based vector notations. For a qubit system, $\sigma_{X}$ $\sigma_Y$, $\sigma_Z$ represent the Pauli-X,Y,Z matrices respectively; $\sigma_{+} = \begin{bsmallmatrix} 0 & 1 \\ 0 & 0 \end{bsmallmatrix}$ and $\sigma_{-} = \begin{bsmallmatrix} 0 & 0 \\ 1 & 0  \end{bsmallmatrix}$. Without specification, $\norm{\cdot}$ means the $\ell_2$ norm for wave-functions. We use the short-hand notation $\inneravg{\cdot}{\psi}$ to represent the operator's quantum average $\inner{\psi}{(\cdot) \psi}$. $\re{c}$ and $\im{c}$ means the real and imaginary parts of a complex number $c$. The expectation with respect to the classical ensemble is denoted as $\ee[\cdot]$.

\section{Optimizing unraveling schemes driven by Brownian motion}
\label{sec::sde}

We will first revisit the unraveling schemes in SDE forms in \secref{subsec::revisit}. 
Then we will provide a complete characterization of all possible such unraveling schemes in \secref{subsec::characterization} under certain assumptions, which is necessary to identify a reasonable family of SDEs to optimize. Then we will explore how the structure of drift and diffusion terms will affect the rate of change of the (classical) variance of an observable in \secref{subsec::time_deri}.

\subsection{A revisit of unraveling schemes in SDE forms}
\label{subsec::revisit}

The unraveling scheme in SDE forms was introduced in \cite{gisin_quantum-state_1992}, and one may also refer to the monograph by Percival \cite{Percival1998-sh}. We will briefly revisit this framework below. Suppose that $\psi(t)$ is a quantum state (which is possibly un-normalized), and it satisfies the following stochastic differential equation in the It\^o sense, 
    \begin{equation}\label{eqn::sde}
      \ud \psi(t)  = a\big(\psi(t)\big) \ud t + \sum_{j=1}^N b_j\big(\psi(t)\big) \ud W_j, 
    \end{equation}
on the manifold $\mathcal{M}$ (which includes all normalized wave functions, $\mathcal{M}\supseteq \mathcal{S}^n$), where $W_j$ are independent real-valued Brownian motions. The main idea of stochastic unraveling is to find appropriate $a$ and $b_j$ such that the ensemble average of $\ee\big[\psi(t)\psi(t)^\dagger\big]$ satisfies the Lindblad equation \eqref{eqn::lb}. It could be easily shown that the drift and diffusion terms must satisfy the following condition:

\begin{lemma}
\label{lem::sde_unraveling}
If $a(\psi)$, $b_j(\psi)$ satisfy
    \begin{equation}\label{eqn::unraveling_condition}
    a(\psi)\psi^\dagger + \psi a^\dagger(\psi) + \sum_{j=1}^N\ b_j(\psi)b_j^\dagger(\psi) = \mathcal{L}(\psi \psi^\dagger), \qquad \forall\ \psi \in \mathcal{M},
    \end{equation}
then $\mathbb{E}\big[\psi(t) \psi(t)^\dagger\big]$ is a solution of \eqref{eqn::lb} for all possible density matrices $\rho_{t_0}$ at arbitrary fixed initial time $t_0$. When $\mani = \mathcal{S}^n$ is the space of the unit sphere, the above sufficient condition becomes an equivalent condition.
\end{lemma}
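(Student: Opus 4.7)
The plan is to derive the matrix-valued Itô equation for the rank-one outer product $\psi(t)\psi(t)^\dagger$ and then take expectations. Applying Itô's product rule componentwise gives
$$\ud\big(\psi(t)\psi(t)^\dagger\big) = (\ud\psi)\,\psi^\dagger + \psi\,(\ud\psi)^\dagger + (\ud\psi)(\ud\psi)^\dagger,$$
where the quadratic covariation term is evaluated using the Itô rules $\ud W_j\,\ud W_k = \delta_{jk}\,\ud t$ and $\ud W_j\,\ud t = 0$. Substituting \eqref{eqn::sde} and collecting terms, I obtain
$$\ud\big(\psi\psi^\dagger\big) = \Big[a(\psi)\psi^\dagger + \psi\, a^\dagger(\psi) + \sum_{j=1}^N b_j(\psi)\, b_j^\dagger(\psi)\Big]\ud t + \sum_{j=1}^N \big[b_j(\psi)\psi^\dagger + \psi\, b_j^\dagger(\psi)\big]\,\ud W_j.$$

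Next I would take expectations, which eliminates the stochastic integrals by the standard local-martingale argument (routine under enough regularity/integrability of $b_j$ along the trajectory, e.g.\ continuity on the compact set $\mathcal{S}^n$). If \eqref{eqn::unraveling_condition} holds pointwise on $\mani$, then the drift coincides with $\mathcal{L}(\psi\psi^\dagger)$, and linearity of $\mathcal{L}$ (together with Fubini) yields
$$\frac{\ud}{\ud t}\ee\big[\psi(t)\psi(t)^\dagger\big] = \ee\big[\mathcal{L}\big(\psi\psi^\dagger\big)\big] = \mathcal{L}\big(\ee[\psi(t)\psi(t)^\dagger]\big).$$
Hence $\rho(t) := \ee[\psi(t)\psi(t)^\dagger]$ satisfies \eqref{eqn::lb}. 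An arbitrary initial density matrix $\rho_{t_0}$ can be realized by an appropriate initial distribution of $\psi(t_0)$; for instance, given a spectral decomposition $\rho_{t_0} = \sum_i p_i \psi_i \psi_i^\dagger$ one may sample $\psi(t_0) = \psi_i$ with probability $p_i$.

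For the converse direction under $\mani = \mathcal{S}^n$, I would probe with pure-state initial data. Fix an arbitrary $\psi_0 \in \mathcal{S}^n$ and initialize the SDE deterministically at $\psi(t_0) = \psi_0$, realizing $\rho_{t_0} = \psi_0\psi_0^\dagger$. By the assumed unraveling property, $\ee[\psi(t)\psi(t)^\dagger] = e^{(t-t_0)\mathcal{L}}\big(\psi_0\psi_0^\dagger\big)$; differentiating at $t = t_0^+$ produces $\mathcal{L}(\psi_0\psi_0^\dagger)$. On the other hand, the Itô expansion above, evaluated at the deterministic value $\psi_0$, gives the one-sided derivative $a(\psi_0)\psi_0^\dagger + \psi_0 a^\dagger(\psi_0) + \sum_j b_j(\psi_0) b_j^\dagger(\psi_0)$. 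Equating these for every $\psi_0 \in \mathcal{S}^n$ recovers \eqref{eqn::unraveling_condition}.

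The only point that warrants care is the passage from local to true martingale when taking expectations; this follows from standard regularity (local Lipschitz continuity of $a$, $b_j$, together with pathwise boundedness on the relevant portion of $\mani$), which is implicit in assuming that \eqref{eqn::sde} admits a well-defined $\mani$-valued solution. Apart from this routine check, the argument is a direct application of matrix-valued Itô calculus and I anticipate no substantive obstacle.
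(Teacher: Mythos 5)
Your proof is correct and is precisely the straightforward verification the paper alludes to when it omits the proof: the It\^o product rule for $\psi\psi^\dagger$, taking expectations and using linearity of $\mathcal{L}$ for sufficiency, and probing with deterministic pure-state initial data on $\mathcal{S}^n$ for the converse. No substantive gap remains beyond the routine martingale/integrability check you already flag.
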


\begin{remark}
The number of noise terms $N$ represents the dimensionality of the real-valued Brownian motion, which serves as a free parameter.
While it is convenient and a common practice to pick $N = K$, this relation is not strictly required; for instance, the cQSD \eqref{eqn::cqsd} employs $N = 2 K$. For most of the unraveling schemes considered below, we adopt the convenient convention of $N = K$ to minimize the number of free parameters.
\end{remark}

This lemma is straightforward to verify and thus the detailed proof is skipped. Hence, finding a dynamics for stochastic wave functions becomes a task of finding the drift and diffusion coefficients.
In previous literatures, a few important examples are
\begin{itemize}
    \item Linear quantum state diffusion (LQSD) \cite[Chp.~4.5]{Percival1998-sh}:
   \begin{align}
   \label{eqn::lqsd}
    \left\{\ 
   \begin{aligned}
a(\psi) &= \Big( -i {H} - \frac{1}{2} \sum_k L_k^\dagger L_k \Big) \psi, \\
b_{k}(\psi) &= L_k \psi.
\end{aligned}
\right.
\end{align}
In this case, $a(\psi)$ produces a non-Hermitian dynamics. 
    
\item Quantum state diffusion (rQSD) \cite{gisin_quantum-state_1992,Percival1998-sh}:
   \begin{align} 
\label{eqn::qsd}
  \text{(rQSD)} \qquad \left\{\ 
   \begin{aligned}
a(\psi) &= \left(-i {H} + \sum_k \langle \psi, L_k^\dagger \psi \rangle L_k 
- \frac{1}{2} L_k^\dagger L_k - \frac{1}{2} \left| \langle \psi, L_k^\dagger \psi \rangle \right|^2 \right) \psi, \\
b_{k}(\psi) &= \big(L_k - \langle \psi, L_k \psi \rangle \big)\psi.
\end{aligned}
\right.
\end{align}
This choice preserves the pathwise norm of the wave function. 

In the original literature, QSD was presented using complex-valued Brownian motions, and these can be absorbed into the function $b_k$ so that for each $k$, the term $b_k$ is replaced by two terms
\begin{align}
\label{eqn::cqsd}
 \text{(cQSD)}\qquad b_{k,1} = \frac{b_k}{\sqrt{2}}, \qquad b_{k,2} = i\frac{b_k}{\sqrt{2}}.
\end{align}

\item  Homodyne measurement \cite{albarelli_pedagogical_2024,vovk_entanglement-optimal_2022}:
   \begin{align} 
\label{eqn::qsd_homo}
  \text{(homodyne)} \qquad \left\{\ 
   \begin{aligned}
a(\psi) &= -i {H}\psi  - \frac{1}{2}\sum_{k=1}^K  \left(\begin{aligned} & L_k^\dagger L_k + \frac{1}{4} \abs{\inneravg{L_k e^{i \theta_k} + L_k^\dagger e^{-i \theta_k}}{\psi}}^2 \\
& \qquad - \inneravg{L_k e^{i \theta_k} + L_k^\dagger e^{-i \theta_k}}{\psi} e^{i \theta_k} L_k\end{aligned}\right) \psi, \\
b_{k}(\psi) &= \big(e^{i \theta_k} L_k - \frac{\inneravg{L_k e^{i \theta_k} + L_k^\dagger e^{-i \theta_k}}{\psi}}{2}\big) \psi.
\end{aligned}
\right.
\end{align}
where the parameter $\theta_k$ corresponds to the quadrature in continuous homodyne measurement.

\end{itemize}

As complex-valued Brownian motion could also be represented via real-valued Brownian motion, we only consider and present real-valued Brownian motions here for mathematical simplicity. The open question that we will investigate below is how to better choose the drift $a$ and diffusion coefficients $b_j$ so that the stochastic unraveling scheme performs better with a smaller statistical error. To address this question, we first need to answer how to characterize the family of possible unraveling schemes in the form \eqref{eqn::sde}, and then we optimize over this family.

\subsection{An almost complete characterization of stochastic unraveling in SDE forms}
\label{subsec::characterization}

Due to the possibility of introducing certain redundancy in $b_j$, we consider the following case to characterize the family of possible drift and diffusion terms that satisfy \eqref{eqn::unraveling_condition}. 

\begin{theorem}[A complete characterization for the case of one Lindblad operator]
\label{lem::characterization}
Suppose that we consider only one Lindblad operator $L$, so that the generator is 
\begin{align*}
\mathcal{L}(\rho) = - i \comm{H}{\rho} + L \rho L^\dagger - \frac{1}{2} L^\dagger L \rho - \frac{1}{2} \rho L^\dagger L,
\end{align*}
and only choose an SDE of the following form:
\begin{align}
\label{eqn::sde_1}
\ud \psi(t) = a\big(\psi(t)\big) \ud t + b\big(\psi(t)\big) \ud W(t),
\end{align}
where $a, b$ are assumed to be continuous functions. Then, 
\begin{enumerate}[(i)]
\item All possible $a$ and $b$ that satisfy \eqref{eqn::unraveling_condition} can be parameterized via the following way almost everywhere for both manifold $\mathcal{M} = \Complex^n$ and $\mathcal{S}^n$,
\begin{align}
\label{eqn::ab_Cn}
\begin{aligned}
a(\psi) &= -i {H}\psi - \frac{1}{2} L^\dagger L\psi + \big(-\frac{1}{2} \abs{{\eta(\psi)}}^2 + i {\gamma(\psi)} \big) \psi - e^{i \theta(\psi)} \eta(\psi)^* L \psi ,\\
b(\psi) &= \eta(\psi) \psi + e^{i \theta(\psi)} L \psi,
\end{aligned}
\end{align}
where $\eta: \mathcal{M} \to \Complex$ is a complex-valued function, and $\theta, \gamma: \mathcal{M} \to \Real$ are real-valued functions. \medskip

\item In particular, the above form will preserve the pathwise norm of $\psi$ when restricted to the manifold $\mathcal{S}^n$ iff
\begin{align}
\label{eqn::norm_condition}
\eta(\psi) = i h(\psi) - e^{i\theta(\psi)} \inneravg{L}{\psi},
\end{align}
where $h:\mathcal{S}^n \to \Real$ is a real-valued function.
 \medskip
\end{enumerate}
\end{theorem}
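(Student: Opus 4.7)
The plan is to reduce the unraveling condition \eqref{eqn::unraveling_condition} to a rank-constrained matrix equation by first absorbing the deterministic Hamiltonian and anticommutator contributions into the drift. Specifically, setting $\tilde{a}(\psi) := a(\psi) + i H \psi + \tfrac{1}{2} L^\dagger L \psi$ turns \eqref{eqn::unraveling_condition} into
\begin{equation*}
\tilde{a}\,\psi^\dagger + \psi\, \tilde{a}^\dagger + b\, b^\dagger = L \psi \psi^\dagger L^\dagger.
\end{equation*}
The right-hand side is rank-one positive semidefinite, and so is $b b^\dagger$, which strongly constrains $b$.

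To pin down $b$, I would project by $P(\psi) := \unit - \psi \psi^\dagger / \norm{\psi}^2$, which is well-defined for $\psi \neq 0$ and therefore almost everywhere on $\mani$. Conjugating the equation by $P$ annihilates the $\psi$-terms on the left and yields $(P b)(P b)^\dagger = (P L \psi)(P L \psi)^\dagger$. Since two nonzero rank-one positive-semidefinite matrices $u u^\dagger = v v^\dagger$ agree if and only if $u = e^{i\theta} v$ for some real $\theta$, one obtains $P b = e^{i\theta(\psi)} P L \psi$. Adding an arbitrary $\psi$-component gives $b(\psi) = \eta(\psi) \psi + e^{i\theta(\psi)} L \psi$ for some complex function $\eta$ and real function $\theta$. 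Substituting this back and expanding $b b^\dagger$ reduces the drift equation to
\begin{equation*}
\tilde{a}\, \psi^\dagger + \psi\, \tilde{a}^\dagger = -\abs{\eta}^2\, \psi \psi^\dagger - \eta\, e^{-i\theta}\, \psi \psi^\dagger L^\dagger - \eta^*\, e^{i\theta}\, L \psi \psi^\dagger .
\end{equation*}
A direct check shows that $\tilde{a}_0 := -\tfrac{1}{2} \abs{\eta}^2 \psi - \eta^* e^{i\theta} L \psi$ is a particular solution. For the homogeneous equation $v \psi^\dagger + \psi v^\dagger = 0$, taking the inner product with $\psi$ on the right forces $v$ to be parallel to $\psi$ with purely imaginary coefficient, i.e., $v = i \gamma \psi$ for some $\gamma \in \Real$. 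Undoing the shift $\tilde{a} \mapsto a$ then yields the advertised form \eqref{eqn::ab_Cn}.

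For part (ii), I would apply It\^o's formula to $\norm{\psi(t)}^2$, giving
\begin{equation*}
\ud \norm{\psi}^2 = \big( 2\re{\inner{\psi}{a}} + \norm{b}^2 \big)\, \ud t + 2\re{\inner{\psi}{b}}\, \ud W .
\end{equation*}
Taking the trace of \eqref{eqn::unraveling_condition} shows that the drift coefficient equals $\tr\!\big(\mathcal{L}(\psi \psi^\dagger)\big) = 0$ automatically, so pathwise norm preservation on $\mathcal{S}^n$ is equivalent to $\re{\inner{\psi}{b}} = 0$. Inserting the explicit form of $b$ and using $\norm{\psi} = 1$ gives $\re{\eta + e^{i\theta} \inneravg{L}{\psi}} = 0$, which is exactly \eqref{eqn::norm_condition} with $h(\psi) := \im{\eta + e^{i\theta} \inneravg{L}{\psi}}$.

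The main obstacle is the degenerate set where $P L \psi = 0$, i.e., where $L \psi$ is collinear with $\psi$: on that set $\theta$ is undetermined and the factorization of $b$ into $\psi$- and $L \psi$-components is ambiguous. This set is the source of the ``almost everywhere'' qualifier and is the only delicate piece of the argument. A minor secondary point is to verify that, away from this set, the scalar functions $\eta$, $\theta$, $\gamma$ can be chosen so as to reproduce the original continuous $a, b$; since only pointwise existence is claimed, no further regularity of $\eta, \theta, \gamma$ is required.
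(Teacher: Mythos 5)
Your proof is correct, and its skeleton is the same as the paper's: fix a $\psi$ that is not an eigenvector of $L$ (the source of the almost-everywhere qualifier in both arguments), work relative to the splitting $\Complex^n = \mathrm{span}\{\psi\}\oplus \mathrm{span}\{\psi\}^{\perp}$, pin down $b$ first as a unit phase times $L\psi$ plus an arbitrary $\psi$-component, then back-substitute to solve for $a$. The differences are in execution. The paper subtracts the LQSD pair, builds an explicit Gram--Schmidt basis $\{\phi_1=\psi,\ \phi_2\propto P L\psi,\dots\}$, and matches all six blocks of matrix elements; you keep $b$ whole, conjugate once by the projector $P$, and invoke uniqueness of nonzero rank-one positive-semidefinite factorizations ($uu^\dagger = vv^\dagger \neq 0 \Rightarrow u = e^{i\theta}v$), which is precisely what the paper's conditions $B_R B_R^\dagger = 0$ and $\abs{\beta_2+1}^2=1$ encode, and then you handle the drift by a particular solution plus the kernel of $v\mapsto v\psi^\dagger+\psi v^\dagger$, which is $\{i\gamma\psi:\gamma\in\Real\}$. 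Your version is slightly more structural and avoids the basis bookkeeping, at the cost of being less explicit about which matrix element forces which constraint. For part (ii) both arguments are the same It\^o computation; your remark that the drift of $\norm{\psi}^2$ vanishes automatically because $\tr\big(\mathcal{L}(\psi\psi^\dagger)\big)=0$ is a clean shortcut for the term-by-term cancellation the paper carries out, and the remaining condition $\re{\inner{\psi}{b}}=0$ gives \eqref{eqn::norm_condition} exactly as stated.
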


This theorem identifies a complete characterization of all possible schemes that satisfy \eqref{eqn::unraveling_condition} in a minimal setting, and in particular, a complete characterization of all possible unraveling schemes in SDE forms \eqref{eqn::sde_1} with pathwise norm-preservation. 
The proof is postponed to \appref{proof::lem::characterization}. This result is not fundamentally different from the derivation of \cite{gisin_quantum-state_1992}, but in their original paper, the prefactor $e^{i \theta}$ was neglected and overlooked, which turns out to be very useful in improving the performance of stochastic unraveling. Physically, this parameter $\theta$ connects to the quadrature in homodyne measurement \eqref{eqn::qsd_homo}; other parameters like $\gamma$ and $h$ are purely artificial parameters arising from mathematical derivation.

The above general form \eqref{eqn::ab_Cn} encompasses these typical examples in literature:
\begin{itemize}
\item When $\theta = 0$, $\eta = 0$, $\gamma = 0$, the above \eqref{eqn::ab_Cn} reduces to LQSD \eqref{eqn::lqsd}; 
\item When  $\theta = 0$, $\eta = -\inner{\psi}{L\psi}$, $\gamma = 0$, it reduces to the rQSD \eqref{eqn::qsd}.
\item When norm is preserved and $h = 0$, then 
\begin{align*}
b(\psi) = e^{i \theta(\psi)} \big(L\psi -\inneravg{L}{\psi} \psi \big).
\end{align*}
This is simply the rQSD with a phase factor $e^{i \theta(\psi)}$. However, since this $\theta$ is a state-dependent function, it is different from simply using a complex-valued Brownian motion (which is state-independent) \cite{gisin_quantum-state_1992}.

\item When one chooses
\begin{align}
\label{eqn::homodyne}
\theta(\psi) \equiv \theta, \qquad \gamma \equiv 0, \qquad \eta(\psi) = -\frac{\inneravg{L e^{i \theta} + L^\dagger e^{-i \theta}}{\psi}}{2} \equiv - \re{\inneravg{L}{\psi} e^{i \theta}},
\end{align}
then it refers to the form of homodyne measurement \cite{albarelli_pedagogical_2024}.
Mathematically, such a $\theta$ can be state-dependent, and this family of SDE has been considered in \cite{vovk_entanglement-optimal_2022} to find an dynamically optimal dynamics that minimizes the entanglement entropy with state-dependent $\theta$.

\end{itemize}

\begin{remark}
Note that a term like $i \gamma(\psi) \psi$ will only change the phase in the state vector $\psi$ (as will be more clear in the proof in Appendix~\ref{proof::equiv_opti}, and will not affect the observable. If we only worry about the observable, then the effective dynamics (by tracing out these redundant parameters) that \emph{also} has pathwise norm preservation will be
\begin{align*}
a(\psi) &= -i {H}\psi - \frac{1}{2} L^\dagger L\psi  -\frac{1}{2} \abs{{\eta(\psi)}}^2 \psi - e^{i \theta(\psi)} \eta(\psi)^* L \psi ,\\
b(\psi) &= \eta(\psi) \psi + e^{i \theta(\psi)} L \psi, \qquad  \eta(\psi) = - \re{\inneravg{L}{\psi} e^{i \theta(\psi)}}.
\end{align*}
This is essentially the form for homodyne measurement. This establishes an interesting fact that the effective family of stochastic unraveling scheme with pathwise norm preservation in the above SDE form is essentially the one for homodyne measurement with state-dependent quadrature.
\end{remark}

Due to the linearity of the Lindblad equation, it is natural to consider the following family of stochastic unraveling schemes for $N$ Lindblad operators:
\begin{corollary}
The following family of drift and diffusion terms are admissible stochastic unraveling schemes:
\begin{align}
\label{eqn::ab_Cn::2}
\begin{aligned}
a(\psi) &= -i {H}\psi + \sum_{k=1}^{{K}}\ \Big(-\frac{1}{2} L_k^\dagger L_k \psi + \big(-\frac{1}{2} \abs{\eta_k(\psi)}^2 + i {\gamma}_k(\psi) \big) \psi - e^{i \theta_k(\psi)} \eta_k(\psi)^* L_k \psi\Big) , \\
b_k(\psi) &= \eta_k(\psi)\ \psi + e^{i \theta_k(\psi)} L_k \psi,
\end{aligned}
\end{align}
where for each $k = 1, 2, \cdots, K$, $\eta_k$ is a complex-valued continuous function, and $\theta_k$, $\gamma_k$ are real-valued continuous functions. To preserve the norm, we need
\begin{align}
\label{eqn::norm}
\eta_k(\psi) = i h_k(\psi) - e^{i\theta_k(\psi)} \inneravg{L_k}{\psi},
\end{align}
where each $h_k$ is a real-valued continuous function.
\end{corollary}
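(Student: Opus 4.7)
The plan is to derive the corollary by reducing it channel-by-channel to Theorem~\ref{lem::characterization} and then exploiting the linearity of the Lindblad generator in its dissipative channels. Specifically, I would split $\mathcal{L} = \mathcal{L}_H + \sum_{k=1}^K \mathcal{D}_k$ with $\mathcal{L}_H(\rho) = -i\comm{H}{\rho}$ and $\mathcal{D}_k(\rho) = L_k \rho L_k^\dagger - \tfrac{1}{2}\{L_k^\dagger L_k, \rho\}$, and observe that the right-hand side of the unraveling identity \eqref{eqn::unraveling_condition} decomposes additively. It therefore suffices to realize each $\mathcal{D}_k$ by an individual drift summand $\tilde a_k$ together with a single diffusion term $b_k$, and to realize $\mathcal{L}_H$ separately by a Hamiltonian drift summand.

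For each $k$, I would apply Theorem~\ref{lem::characterization} with Hamiltonian set to zero and $L = L_k$, which directly supplies the parametric family
\begin{equation*}
\tilde a_k(\psi) = -\tfrac{1}{2} L_k^\dagger L_k \psi + \big(-\tfrac{1}{2}\abs{\eta_k(\psi)}^2 + i \gamma_k(\psi)\big)\psi - e^{i\theta_k(\psi)} \eta_k(\psi)^* L_k \psi
\end{equation*}
and $b_k(\psi) = \eta_k(\psi)\psi + e^{i\theta_k(\psi)} L_k \psi$, satisfying $\tilde a_k \psi^\dagger + \psi \tilde a_k^\dagger + b_k b_k^\dagger = \mathcal{D}_k(\psi\psi^\dagger)$. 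A one-line calculation also gives $(-iH\psi)\psi^\dagger + \psi(-iH\psi)^\dagger = -i\comm{H}{\psi\psi^\dagger} = \mathcal{L}_H(\psi\psi^\dagger)$, so setting $a(\psi) = -iH\psi + \sum_k \tilde a_k(\psi)$ recovers \eqref{eqn::ab_Cn::2} and summing the per-channel identities together with the Hamiltonian one yields \eqref{eqn::unraveling_condition}.

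For pathwise norm preservation on $\mathcal{S}^n$, I would apply It\^o's formula to $\norm{\psi(t)}^2 = \inner{\psi(t)}{\psi(t)}$, giving
\begin{equation*}
\ud \norm{\psi}^2 = \Big(2\,\rebig{\inner{\psi}{a(\psi)}} + \sum_k \norm{b_k(\psi)}^2\Big)\,\ud t + 2 \sum_k \rebig{\inner{\psi}{b_k(\psi)}}\,\ud W_k.
\end{equation*}
Taking the trace of \eqref{eqn::unraveling_condition} and using trace-preservation of $\mathcal{L}$ shows the $\ud t$ coefficient vanishes automatically whenever $\norm{\psi} = 1$. By independence of the $W_k$, pathwise norm preservation then reduces to enforcing $\re{\inner{\psi}{b_k(\psi)}} = 0$ for each $k$ separately, which upon substituting $b_k$ becomes $\re{\eta_k + e^{i\theta_k}\inneravg{L_k}{\psi}} = 0$; parameterizing the remaining imaginary degree of freedom by the real-valued function $h_k := \im{\eta_k + e^{i\theta_k}\inneravg{L_k}{\psi}}$ rewrites this as $\eta_k = ih_k - e^{i\theta_k}\inneravg{L_k}{\psi}$, which is exactly \eqref{eqn::norm}.

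I do not anticipate any substantive obstacle: the whole argument is a linear extension of Theorem~\ref{lem::characterization} combined with a short It\^o computation. The only subtle point worth flagging is that the $\ud t$ coefficient in the norm evolution is killed automatically by the unraveling identity itself, so that the norm-preservation requirement collapses to a single real scalar constraint per Brownian channel.
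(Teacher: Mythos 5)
Your proposal is correct and follows essentially the route the paper intends: the paper omits the proof, stating only that the corollary "immediately follows from the linearity of the Lindblad generator and Theorem~\ref{lem::characterization}," which is precisely your channel-by-channel decomposition, and your norm-preservation argument mirrors the It\^o computation in the proof of Theorem~\ref{lem::characterization}(ii). Your observation that the $\ud t$ coefficient of $\ud\norm{\psi}^2$ vanishes automatically by taking the trace of \eqref{eqn::unraveling_condition} (rather than by explicit term-by-term cancellation as in the paper's proof of part (ii)) is a clean shortcut, so the martingale-coefficient condition $\re{\inner{\psi}{b_k(\psi)}}=0$ per channel is indeed all that remains, and it yields \eqref{eqn::norm} exactly as you state.
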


This conclusion immediately follows from the linearity of Lindblad generator \eqref{eqn::lb} and Theorem~\ref{lem::characterization}, and thus a detailed proof is omitted.

\subsection{Dynamically optimal quantum state diffusion}
\label{subsec::time_deri}

For any unraveling scheme satisfying \eqref{eqn::unraveling_condition}, one can simulate the SDE \eqref{eqn::ab_Cn::2} to recover $\rho(t)$ and approximate
\begin{align}
    \label{eqn::observable}
\tr\big(\obs \rho(t)\big) \approx \frac{1}{\numsample} \sum_{j=1}^{\numsample} \avgbraket{\obs}{\psi^{(j)}(t)},
\end{align}
where each $\psi^{(j)}(t)$ is an \emph{i.i.d.} random variable obtained by solving \eqref{eqn::ab_Cn::2}, $\obs = \obs^\dagger$ is the observable of interest, and $\numsample$ is the number of samples.
If we temporarily ignore numerical discretization errors (which can be minimized using advanced integrators), all such unraveling schemes yield an unbiased estimate of $\tr\big(\obs \rho(t)\big)$. The main difference between these schemes lies in the variance of the stochastic estimates. Although the original QSD scheme \cite{gisin_quantum-state_1992} already performs well in many examples studied in the literature, our goal is to investigate which scheme is optimal in this regard.

We first note that the variance of the stochastic unraveling scheme is
\begin{align*}
\text{Var}(a, b, t) &= \ee \Big[\absbig{\avgbraket{\obs}{\psi(t)}}^2 \Big] - \absbig{\ee\big[ \avgbraket{\obs}{\psi(t)}\big]}^2 \\
& \equiv \ee \Big[\absbig{\avgbraket{\obs}{\psi(t)}}^2\Big] - \abs{\tr(\obs \rho(t))}^2.
\end{align*}
This variance merely comes from the classical fluctuation arising from the numerical simulation, rather than the quantum statistical variance \cite{BREUER_1996}.
For a given Lindblad equation to simulate, the mean of the observable is always fixed, and thus it is only necessary to optimize the second-moment above.

We formulate the problem as
\begin{align}
\label{eqn::deri_var}
\begin{aligned}
\mathsf{dV}_{\text{diffusion}} = &\ \text{minimize}\ \frac{\ud}{\ud t}\ \ee \absbig{\avgbraket{\obs}{\psi(t)}}^2 \\
&\text{subject to functions}\ a, b \text{ following the diffusion ansatz in } \eqref{eqn::ab_Cn::2}.
\end{aligned}
\end{align}
This minimizes the increment of variance locally in time $t$. This is a greedy-type dynamical optimization approach without considering the global effect, in a way similar to the dynamical low-rank approximation \cite{koch_dynamical_2007,le_bris_low-rank_2013,cao_stochastic_2018}.

\begin{theorem}[Dynamically optimal quantum state diffusion]
\label{thm::optimal_soln}
The optimization problem \eqref{eqn::deri_var} for a general SDE ansatz in \eqref{eqn::ab_Cn::2} admits the following explicit global minimizers $(\eta_k^\star, \theta_k^\star)$, where their expressions are given by
\begin{align}
\label{eqn::do_qsd}
{\normalfont \text{(DO-QSD)}}\qquad\qquad 
\left\{\ 
\begin{aligned}
\eta_k^\star(\psi) =&\ - e^{i \theta_k^\star(\psi)} \inneravg{L_k}{\psi}, \\
e^{i \theta^\star_k(\psi)} =&\ i e^{-i \phase_k(\psi)},
\end{aligned}\right.
\end{align}
where $\phase_k(\psi)$ is the phase of $\inneravg{\obs L_k}{\psi} - \inneravg{L_k}{\psi} \inneravg{\obs}{\psi}$. 
The optimal value is 
\begin{align}
\label{eqn::p_optimal}
\mathsf{dV}_{\text{diffusion}} = 2 \ee \Big[\innerbig{\obs \psi}{\mathcal{L}(\psi\psi^\dagger)\obs \psi}\Big]  + \sum_{k=1}^K\ 2 \ee\Big[\inner{L_k\psi}{\obs L_k\psi}\inner{\psi}{\obs\psi} - \abs{\inner{\obs\psi}{L_k\psi}}^2\Big] \qquad \psi \equiv \psi(t).
\end{align}
\end{theorem}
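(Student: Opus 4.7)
My approach is to reduce the dynamical variance optimization to a pointwise (in $\psi$) minimization of the diffusion coefficient of the real scalar $g(\psi) := \inner{\psi}{\obs\psi}$ (real because $\obs = \obs^\dagger$). Applying It\^o's formula to \eqref{eqn::ab_Cn::2} with $\ud W_k\,\ud W_\ell = \delta_{k\ell}\ud t$ yields
\begin{align*}
\ud g = u\,\ud t + \sum_k v_k\,\ud W_k, \quad u := a^\dagger \obs \psi + \psi^\dagger \obs a + \sum_k b_k^\dagger \obs b_k, \quad v_k := 2\re{\psi^\dagger \obs b_k}.
\end{align*}
It\^o's product rule then gives $\ud(g^2) = \big(2gu + \sum_k v_k^2\big)\ud t + (\text{martingale})$, so
\begin{align*}
\frac{\ud}{\ud t}\ee[g^2] = 2\ee[gu] + \sum_k \ee[v_k^2].
\end{align*}

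The key observation is that the ansatz \eqref{eqn::ab_Cn::2} is engineered so that \eqref{eqn::unraveling_condition} holds identically; taking $\tr(\obs\,\cdot\,)$ of that identity yields $u = \tr\bigl(\obs\,\mathcal{L}(\psi\psi^\dagger)\bigr) = \inner{\psi}{\mathcal{L}^\dagger(\obs)\psi}$, a fixed function of $\psi$ alone (the real parameter $\gamma_k$ contributes $\pm i\gamma_k\inneravg{\obs}{\psi}$ to $a^\dagger \obs\psi + \psi^\dagger \obs a$ and cancels; it does not appear in $b_k$). Hence $\ee[2gu]$ is invariant under the free parameters, and the optimization reduces to minimizing $\sum_k \ee[v_k^2]$. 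Since each $v_k^2 \geq 0$, it suffices to find $(\eta_k,\theta_k)$ making $v_k \equiv 0$ pointwise. Substituting the ansatz gives $\psi^\dagger \obs b_k = \eta_k \inneravg{\obs}{\psi} + e^{i\theta_k} \inneravg{\obs L_k}{\psi}$; the choice $\eta_k^\star = -e^{i\theta_k^\star}\inneravg{L_k}{\psi}$ collapses this to $e^{i\theta_k^\star}\bigl(\inneravg{\obs L_k}{\psi} - \inneravg{L_k}{\psi}\inneravg{\obs}{\psi}\bigr)$, and the further rotation $e^{i\theta_k^\star} = i e^{-i\phase_k(\psi)}$ (with $\phase_k$ the phase of the bracket) makes this purely imaginary, so $v_k^\star = 0$ and \eqref{eqn::do_qsd} is a global minimizer.

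It remains to simplify $2\ee[gu] = 2\ee\big[\inneravg{\obs}{\psi}\inner{\psi}{\mathcal{L}^\dagger(\obs)\psi}\big]$ into the form \eqref{eqn::p_optimal}. The main obstacle is this pointwise algebraic identity: after expanding both $\mathcal{L}^\dagger(\obs) = i\comm{H}{\obs} + \sum_k\bigl(L_k^\dagger \obs L_k - \tfrac{1}{2}\{L_k^\dagger L_k,\obs\}\bigr)$ and $\mathcal{L}(\psi\psi^\dagger)$ directly, the Hamiltonian pieces agree via the anti-Hermiticity of $\comm{H}{\,\cdot\,}$, while the dissipative pieces reorganize using the identity $\inner{\psi}{\{L_k^\dagger L_k,\obs\}\psi} = 2\re{\inner{\psi}{\obs L_k^\dagger L_k\psi}}$, so that the $\abs{\inner{\obs\psi}{L_k\psi}}^2$ contribution arising from $L_k\psi\psi^\dagger L_k^\dagger$ inside $\mathcal{L}(\psi\psi^\dagger)$ balances correctly against the jump term $\inner{L_k\psi}{\obs L_k\psi}\inneravg{\obs}{\psi}$. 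As a final remark, the minimizer is not unique: any $(\eta_k,\theta_k)$ satisfying $\re{\eta_k\inneravg{\obs}{\psi} + e^{i\theta_k}\inneravg{\obs L_k}{\psi}} = 0$ achieves the same optimum, and the stated choice in \eqref{eqn::do_qsd} coincides with the norm-preserving solution with $h_k = 0$ in \eqref{eqn::norm}.
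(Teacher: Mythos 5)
Your proposal is correct, and it reaches the same optimality condition as the paper but organizes the computation differently. The paper's proof (via Lemma~\ref{lem::equiv_opti}) expands $\frac{\ud}{\ud t}\ee\abs{\inner{\psi}{\obs\psi}}^2$ by It\^o, substitutes the unraveling constraint to trade $a\psi^\dagger+\psi a^\dagger$ for $\mathcal{L}(\psi\psi^\dagger)-\sum_k b_kb_k^\dagger$, and then works through a fairly long expansion of the $b$-dependent remainder to isolate the loss $\loss_1=\ee\absbig{\re{\eta\inneravg{\obs}{\psi}+e^{i\theta}\inner{\obs\psi}{L\psi}}}^2$ plus parameter-free terms. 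Your route instead treats $g=\inneravg{\obs}{\psi}$ as a scalar It\^o process and observes that its \emph{entire drift} $u=a^\dagger\obs\psi+\psi^\dagger\obs a+\sum_k b_k^\dagger\obs b_k$ is pinned down by taking $\tr(\obs\,\cdot)$ of \eqref{eqn::unraveling_condition}, so that all tunable freedom sits in the martingale coefficients $v_k=2\re{\psi^\dagger\obs b_k}$ and the optimization collapses to driving $\sum_k\ee[v_k^2]$ to zero pointwise; indeed $\ee[v_k^2]=4\loss_1$, so the two decompositions agree term by term. Your organization is cleaner and makes the variational structure transparent, but it shifts work to the end: you must separately verify the algebraic identity $\inneravg{\obs}{\psi}\,\tr\big(\obs\,\mathcal{L}(\psi\psi^\dagger)\big)=\inner{\obs\psi}{\mathcal{L}(\psi\psi^\dagger)\obs\psi}+\sum_k\big(\inner{L_k\psi}{\obs L_k\psi}\inneravg{\obs}{\psi}-\abs{\inner{\obs\psi}{L_k\psi}}^2\big)$ to recover the stated form \eqref{eqn::p_optimal}, which the paper's longer expansion produces automatically. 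You only sketch this identity, but the cancellation mechanism you describe (the $\abs{\inner{\obs\psi}{L_k\psi}}^2$ contribution from $L_k\psi\psi^\dagger L_k^\dagger$ balancing the explicit subtraction, and the anticommutator collapsing to $2\re{\inner{\psi}{\obs L_k^\dagger L_k\psi}}$) is correct and the identity does hold, so this is a routine omission rather than a gap. Your closing remarks on non-uniqueness of the minimizer and consistency with the norm-preserving family \eqref{eqn::norm} with $h_k=0$ are also accurate.
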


The proof is postponed to \appref{proof::equiv_opti}. We make a few remarks below:
\begin{itemize}

\item (\emph{Computational costs}). The term $\inneravg{L_k}{\psi}\equiv \inner{\psi}{L_k\psi}$ also needs to be computed in the conventional QSD, so that there is no extra cost in computing $L_k {\psi}$. Another term is $\obs {\psi}$, which also needs to be computed each time when one uses the stochastic unraveling, as it is needed in computing the observable on-the-fly \eqref{eqn::observable}. Therefore, the additional computational cost is negligible, while the reduction in variance sometimes could be significant, as will be demonstrated in \secref{sec::numerics}.

\item (\emph{Multiple observables}). The generalization to multiple observables $\obs$ can be formulated similarly; see \appref{appx::multiple_observable}. However, since the phase factor must accommodate different observables, achieving substantial improvements over conventional QSD \eqref{eqn::qsd} may become more challenging. Our primary interest in this paper is the case where a single observable is considered.

\item (\emph{Other properties}). The rQSD \eqref{eqn::qsd} and the cQSD (with complex noise) \eqref{eqn::cqsd} are both sub-optimal in this sense, as will be numerically demonstrated in \secref{sec::numerics}. Furthermore, we observe that the role of $h$ is negligible in terms of variance reduction. The optimal solution $\theta_k^\star$ is invariant under the addition of a constant to $\obs$; that is, it remains unchanged for the entire family $\{\obs + c \unit\ :\ c\in \Real\}$, which is certainly also physically reasonable.

\end{itemize}

\section{Optimizing unraveling schemes driven by Poisson process}
\label{sec::sjp}

The stochastic unraveling could be achieved not only by an SDE process, but also by the jump process \cite{dalibard_wave-function_1992}. We will revisit the unraveling schemes in jump process forms in \secref{subsec::revisit_jump}. Then we will similarly characterize all possible unraveling schemes in jump processes when there is only one Lindblad operator with one jump operator in \secref{subsec::characterization_qjp}. We will subsequently optimize the local rate of change of the variance so that the optimal jump scheme is achieved in \secref{sec::do_qjp}.

\subsection{A revisit of unraveling schemes in jump process}
\label{subsec::revisit_jump}

Similar to the above SDE cases, we could also consider piecewise-deterministic Markov processes (PDMP) \cite{breuer_theory_2007} in the following form:
\begin{align}
\label{eqn::sjp}
\ud \psi(t) = a\big(\psi(t)\big) \ud t + \sum_{j=1}^{N} b_j\big(\psi(t)\big) \ud \NP_j(t),
\end{align}
where $\NP_j(t)$ are independent Poisson processes with state-dependent rate $\jumprate_j\big(\psi(t)\big)$. 
Numerically, to simulate the above dynamics, we can use 
\begin{align}
    \label{eqn::jump_numerical}
    \begin{aligned}
\psi(t+\dt) \approx \left\{
\begin{aligned}
&\ \psi(t) + a(\psi(t))\ \dt \qquad & \text{ with probability } 1 - \sum_{j} \jumprate_j(\psi(t)) \dt;\\
&\ \psi(t) + b_j(\psi(t)), \qquad & \text{ with probability } \jumprate_j(\psi(t)) \dt \text{ for each } j.
\end{aligned}\right.
\end{aligned}
\end{align}
The classical Monte-Carlo Wave Function (MCWF) approach \cite{dalibard_wave-function_1992}, which can also be called a quantum jump process, refers to the following choices
with the number of jump operators chosen as $N = K$: 
\begin{align}
\label{eqn::WFMC}
\text{(QJP)}\qquad \left\{
\begin{aligned}
a(\psi) &= - i (H - \frac{i}{2} \sum_{k=1}^K L_k^\dagger L_k)\psi + \frac{p(\psi)}{2} \psi, \qquad p(\psi) = \sum_{k=1}^K \inneravg{L_k^\dagger L_k}{\psi},\\
b_k(\psi) &= \frac{L_k \psi}{\norm{L_k \psi}} - \psi, \qquad \jumprate_k(\psi) = \inneravg{L_k^\dagger L_k}{\psi}, \qquad k = 1, 2 \cdots, K.
\end{aligned}\right.
\end{align}
The number $p(\psi)$ is chosen to ensure norm preservation in the absence of jumps; in the literature, this step is typically presented algorithmically, with normalization applied as a post-processing step.
It can be verified that $\rho(t) := \ee\big[\psi(t) \psi(t)^\dagger\big]$ satisfies the Lindblad equation \eqref{eqn::lb}; see also \cite{dalibard_wave-function_1992}. Similar to the quantum state diffusion scheme in \eqref{eqn::qsd}, the jump process scheme \eqref{eqn::WFMC} has also been widely used in the literature. However, to the best of the authors' knowledge, there has been little discussion of all possible such realizations.
We will provide an equivalent condition for the jump process to be a stochastic unraveling of the Lindblad equation with norm preservation.

\begin{lemma}
\label{lem::sjp_unraveling}
    Assume that $\psi(t)$ satisfies the above stochastic jump process \eqref{eqn::sjp} with the regularity assumptions: $a$, $\jumprate_j$, $\sqrt{\jumprate_j} b_j$ are continuous functions on $\mathcal{S}^n$, and $b_j$ is a bounded function on $\mathcal{S}^n$ for any index $j$. Then it is a stochastic unraveling of the Lindblad equation with norm preservation iff 
    \begin{align}
    \label{eqn::qjp_condition}
    \begin{aligned}
    & \text{(norm-preservation):} \\
    &\qquad \inner{\psi}{a(\psi)} + c.c. = 0,\qquad  \forall \psi\in \mathcal{S}^n, \\
    & \qquad \norm{b_j(\psi) + \psi} = 1, \qquad \forall \psi\in \mathcal{S}^n \text{such that } \jumprate(\psi) >0,\ \text{ and }\forall 1\le j\le N, \\
    & \text{(unraveling constraint):} \\
    &\qquad a(\psi) \psi^\dagger + \psi a(\psi)^\dagger + \sum_{j=1}^N \jumprate_j(\psi) \Big(b_j(\psi) \psi^\dagger + \psi b_j(\psi)^\dagger + b_j(\psi) b_j^\dagger(\psi)\Big) = \mathcal{L}(\psi \psi^\dagger).
    \end{aligned}
    \end{align}
\end{lemma}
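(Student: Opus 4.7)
The plan is to treat the norm-preservation and unraveling parts separately, and within each part to handle both directions. For norm preservation I would argue directly from the pathwise description of the PDMP. Between jumps, $\psi(t)$ evolves according to the ODE $\dot{\psi} = a(\psi)$, so $\frac{\ud}{\ud t}\norm{\psi(t)}^2 = \inner{\psi}{a(\psi)} + \inner{a(\psi)}{\psi}$; requiring this to vanish on $\mathcal{S}^n$ is exactly the first line of \eqref{eqn::qjp_condition}. At a jump of type $j$ the state is mapped to $\psi + b_j(\psi)$, and such a jump can only occur where $\jumprate_j(\psi) > 0$, so pathwise norm preservation is equivalent to $\norm{\psi + b_j(\psi)} = 1$ at every such $\psi$, which is the second line of \eqref{eqn::qjp_condition}.

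For the unraveling constraint, I would rely on the extended infinitesimal generator $\mathcal{A}$ of the PDMP, applied to the matrix-valued function $\psi \mapsto \psi\psi^\dagger$. Separating the deterministic drift from each Poisson jump channel gives
\begin{align*}
\mathcal{A}(\psi\psi^\dagger) = a(\psi)\psi^\dagger + \psi a(\psi)^\dagger + \sum_{j=1}^{N}\jumprate_j(\psi)\Big(b_j(\psi)\psi^\dagger + \psi b_j(\psi)^\dagger + b_j(\psi) b_j(\psi)^\dagger\Big),
\end{align*}
where the $b_j b_j^\dagger$ term encodes the second-order increment at a jump. The continuity assumptions, compactness of $\mathcal{S}^n$, and boundedness of $b_j$ ensure that this expression is bounded in $\psi$ and that the Dynkin martingale is a genuine martingale, which justifies differentiating under the expectation to obtain $\dot{\rho}(t) = \ee[\mathcal{A}(\psi(t)\psi(t)^\dagger)]$. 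Under the third line of \eqref{eqn::qjp_condition}, $\mathcal{A}(\psi\psi^\dagger) = \mathcal{L}(\psi\psi^\dagger)$ pointwise, so by linearity $\dot{\rho}(t) = \mathcal{L}(\rho(t))$, giving the forward direction.

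For the converse I would exploit the fact that the unraveling property must hold for every initial density matrix, in particular for pure initial states. Fixing any $\psi_0 \in \mathcal{S}^n$ and taking $\psi(t_0) = \psi_0$ deterministically yields $\rho(t_0) = \psi_0\psi_0^\dagger$ and, by the infinitesimal calculation above, the right-derivative $\dot{\rho}(t_0^+) = \mathcal{A}(\psi_0 \psi_0^\dagger)$; the Lindblad hypothesis forces this to equal $\mathcal{L}(\psi_0\psi_0^\dagger)$. Since $\psi_0 \in \mathcal{S}^n$ was arbitrary, the pointwise identity holds on the whole sphere. The main obstacle in my view is the clean justification of the infinitesimal/Dynkin step under the minimal regularity stated — specifically, arguing that $\ee[\psi(t)\psi(t)^\dagger]$ is right-differentiable at $t_0$ with derivative $\mathcal{A}(\psi_0\psi_0^\dagger)$. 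This reduces to a standard dominated-convergence argument that leans on compactness of $\mathcal{S}^n$, continuity of $a$, $\jumprate_j$, and $\sqrt{\jumprate_j}\,b_j$, and boundedness of $b_j$; once this technicality is in place, the rest of the proof is purely algebraic bookkeeping.
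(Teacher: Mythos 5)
Your proposal is correct and follows essentially the same route as the paper: the extended-generator identity $\mathcal{A}(\psi\psi^\dagger) = a\psi^\dagger + \psi a^\dagger + \sum_j \jumprate_j(b_j\psi^\dagger + \psi b_j^\dagger + b_j b_j^\dagger)$ is exactly what the paper obtains by expanding $\ee[\psi(t+\dt)\psi(t+\dt)^\dagger]$ to first order in $\dt$ via the one-step scheme \eqref{eqn::jump_numerical} and matching with $\mathcal{L}(\psi\psi^\dagger)$, and your converse via pure initial states mirrors the paper's Lemma~\ref{lem::sde_unraveling} logic. You additionally spell out the norm-preservation equivalence (drift flow plus jump map) and the Dynkin/dominated-convergence justification, both of which the paper declares straightforward and omits.
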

When $\jumprate_j(\psi) = 0$, there is no jump at all, so that there won't be any restriction on $b_j$. The proof is given in \appref{proof::lem::sjp}.
It could be directly verified that the above QJP \eqref{eqn::WFMC} satisfies all these constraints.

\subsection{An almost complete characterization of stochastic unraveling in jump process}
\label{subsec::characterization_qjp}

We observe that the above constraint exhibits a structure similar to the It\^o SDE case in \eqref{eqn::unraveling_condition}. 
Therefore, we can likewise provide a complete characterization of all possible jump processes in the same manner.

\begin{theorem}[A complete characterization for the case of one Lindblad operator] 
\label{thm::qjp_characterize}
Suppose that we consider only one Lindblad operator $L$, so that the generator is 
\begin{align*}
\mathcal{L}(\rho) = - i \comm{H}{\rho} + L \rho L^\dagger - \frac{1}{2} L^\dagger L \rho - \frac{1}{2} \rho L^\dagger L,
\end{align*}
and suppose that we only choose the (minimal) jump process in the following form:
\begin{align}
\label{eqn::sjp_1}
\ud \psi(t) = a\big(\psi(t)\big) \ud t + b\big(\psi(t)\big) \ud \NP(t),
\end{align}
where $\NP(t)$ is a state-dependent Poisson process with rate $\jumprate(\psi(t)\big)$. Then all possible $a$, $b$, $\jumprate$ that satisfy the constraints in \eqref{eqn::qjp_condition} can be described as follows:
\begin{align}
\label{eqn::opti_jump}
\left\{
\begin{aligned}
a(\psi) &=  -i {H}\psi - \frac{1}{2} L^\dagger L\psi + \big(-\frac{1}{2} \abs{{\eta(\psi)}}^2 + i {\gamma(\psi)} \big) \psi \\
&\qquad \qquad - e^{i \theta(\psi)} \eta(\psi)^* L \psi - \sqrt{\jumprate(\psi)} \big(\eta(\psi)\ \psi + e^{i \theta(\psi)} L \psi\big) , \\
b(\psi) &= \frac{\eta(\psi)}{\sqrt{\jumprate(\psi)}} \psi + \frac{e^{i\theta(\psi)}}{\sqrt{\jumprate(\psi)}} L \psi , \\
\eta(\psi) &= e^{i \beta(\psi)} \sqrt{\jumprate(\psi) + \abs{\inneravg{L}{\psi}}^2 - \inneravg{L^\dagger L}{\psi}} - e^{i\theta(\psi)} \inneravg{L}{\psi} - \sqrt{\jumprate(\psi)}, \\
\jumprate(\psi) &=  \inneravg{L^\dagger L}{\psi} - \abs{\inneravg{L}{\psi}}^2 + \alpha(\psi),\\
\end{aligned}
\right.
\end{align}
holds almost everywhere on $\mathcal{S}^n$, where $\alpha \ge 0$ is an arbitrary non-negative function, $\beta, \theta, \gamma$ are arbitrary real-valued functions.
\end{theorem}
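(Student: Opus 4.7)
The plan is to reduce this characterization to the SDE case already handled in Theorem~\ref{lem::characterization} via a change of variables. I would introduce the auxiliary functions $\tilde{b}(\psi) := \sqrt{\jumprate(\psi)}\, b(\psi)$ and $\hat{a}(\psi) := a(\psi) + \sqrt{\jumprate(\psi)}\, \tilde{b}(\psi)$; the continuity hypotheses on $a$, $\jumprate$, and $\sqrt{\jumprate}\,b$ ensure that $\hat{a}$ and $\tilde{b}$ are both continuous on $\mathcal{S}^n$. A direct rearrangement of the unraveling identity in \eqref{eqn::qjp_condition} shows that, in these new variables, it becomes precisely the SDE unraveling condition
\begin{align*}
\hat{a}(\psi)\psi^\dagger + \psi \hat{a}(\psi)^\dagger + \tilde{b}(\psi)\tilde{b}(\psi)^\dagger = \mathcal{L}\big(\psi\psi^\dagger\big).
\end{align*}
Invoking Theorem~\ref{lem::characterization}(i) then parametrizes $\hat{a}$ and $\tilde{b}$, almost everywhere on $\mathcal{S}^n$, by a complex-valued $\eta$ and real-valued $\theta, \gamma$ in the form stated there. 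Reverting the substitution via $b = \tilde{b}/\sqrt{\jumprate}$ and $a = \hat{a} - \sqrt{\jumprate}\,\tilde{b}$ immediately reproduces the claimed expressions for $a$ and $b$ in \eqref{eqn::opti_jump}.

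I would then impose the two norm-preservation constraints in \eqref{eqn::qjp_condition}. Taking the trace of the SDE identity above and using $\tr\big(\mathcal{L}(\psi\psi^\dagger)\big) = 0$ already gives the automatic relation $2\rebig{\inner{\psi}{\hat{a}(\psi)}} + \norm{\tilde{b}(\psi)}^2 = 0$. A short algebraic manipulation then shows that both the drift condition $\inner{\psi}{a(\psi)} + c.c. = 0$ and the jump condition $\norm{\psi + b(\psi)}^2 = 1$, after back-substitution, collapse to the single identity $\norm{\tilde{b}(\psi)}^2 + 2\sqrt{\jumprate(\psi)}\,\rebig{\inner{\psi}{\tilde{b}(\psi)}} = 0$. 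Setting $u := \eta(\psi) + e^{i\theta(\psi)}\inneravg{L}{\psi}$ and completing the square rearranges this to
\begin{align*}
\absbig{u + \sqrt{\jumprate(\psi)}}^2 = \jumprate(\psi) + \abs{\inneravg{L}{\psi}}^2 - \inneravg{L^\dagger L}{\psi}.
\end{align*}
Non-negativity of the left-hand side forces the lower bound $\jumprate(\psi) \ge \inneravg{L^\dagger L}{\psi} - \abs{\inneravg{L}{\psi}}^2$, absorbed by the non-negative slack $\alpha(\psi)$; taking a square root introduces an arbitrary phase $\beta(\psi) \in \Real$, and together these reproduce the stated formula for $\eta$. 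The converse direction — that any choice of $(\alpha, \beta, \theta, \gamma)$ of the stated type yields an admissible triple $(a, b, \jumprate)$ — follows by reading the same algebra in reverse.

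The main technical point will be handling the zero set $Z := \{\psi \in \mathcal{S}^n : \jumprate(\psi) = 0\}$, where the substitution $b = \tilde{b}/\sqrt{\jumprate}$ is singular. Fortunately, as noted immediately after Lemma~\ref{lem::sjp_unraveling}, no jump occurs where $\jumprate = 0$, so $b(\psi)$ is then unconstrained by the unraveling identity; the ``almost everywhere on $\mathcal{S}^n$'' clause in the theorem statement is exactly what absorbs this indeterminacy, and the boundedness of $b$ combined with continuity of $\sqrt{\jumprate}\,b$ rules out pathological behaviour as one approaches $Z$ from its complement.
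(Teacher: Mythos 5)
Your proposal follows essentially the same route as the paper's proof in \appref{proof::qjp_characterize}: the substitution $\tilde b = \sqrt{\jumprate}\,b$, $\hat a = a + \jumprate b$ turns the unraveling identity into the SDE condition so that Theorem~\ref{lem::characterization}(i) applies, and completing the square in the (single, collapsed) norm constraint produces the formula for $\eta$, the phase $\beta$, and the lower bound on $\jumprate$ encoded by $\alpha \ge 0$. The one point worth making explicit is that the singular set $\{\psi : \jumprate(\psi) = 0\}$ is genuinely negligible rather than merely ``absorbed'' by the almost-everywhere clause: there $\tilde b$ vanishes, so the SDE parametrization forces $\eta\psi + e^{i\theta}L\psi = 0$, i.e.\ $\psi$ is an eigenvector of $L$, placing this set inside a measure-zero set --- this is exactly the paper's Case~1.
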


The quantity $\lambda$ is the jump rate, and $\alpha$ measures the increment of this rate relative to the quantum variance of the operator $L$ (though $L$ may not be Hermitian herein). The parameter $\theta$ plays the similar role as the measured quadrature in homodyne detection, but it is introduced here primarily for mathematical convenience. The quantities $\gamma$ and $\beta$ are degrees of freedom that do not play any role in estimating the observables (as will be clear in later), so they can be treated as purely artificial parameters arising from mathematical treatment.

We can easily verify that the above QJP \eqref{eqn::WFMC} with $K = 1$ refers to the following special choice:
\begin{align}
\left\{
\begin{aligned}
& \jumprate(\psi) = \inneravg{L^\dagger L}{\psi} \qquad \Longleftrightarrow\qquad  \alpha(\psi) =  \abs{\inneravg{L}{\psi}}^2, \\
& \eta(\psi) = -\sqrt{\jumprate(\psi)}, \qquad \Longleftrightarrow\qquad e^{i \beta(\psi)} = \frac{\inneravg{L}{\psi}}{\abs{\inneravg{L}{\psi}}}, \\
& \theta(\psi) = \gamma(\psi) = 0. \\
\end{aligned}\right.
\end{align}
The proof is postponed to \appref{proof::qjp_characterize}. When there are more than one Lindblad operator and possibly with more $b_j$, it appears   challenging to provide a complete characterization due to the possibility of introducing redundant terms. Though not being the family of all possible such jump processes, the following family of quantum jump processes still provides a very large family of possible stochastic unraveling in the form of jump processes, which becomes a complete characterization in the sense of Theorem~\ref{thm::qjp_characterize}.

\begin{corollary}
The following family of drift and jump terms provide valid stochastic unraveling schemes:
\begin{align}
\label{eqn::jump_general_form}
\left\{
\begin{aligned}
a(\psi) &=  -i {H}\psi - \frac{1}{2} \sum_{k=1}^K L_k^\dagger L_k \psi +  \sum_{k=1}^{K}\big(-\frac{1}{2} \abs{{\eta_k(\psi)}}^2 + i {\gamma}_k(\psi) \big) \psi \\
&\qquad\qquad - \sum_{k=1}^K e^{i \theta_k(\psi)} \eta_k(\psi)^* L_k \psi  - \sum_{k=1}^K \sqrt{\jumprate_k(\psi)} \big(\eta_k(\psi) \psi + e^{i \theta_k(\psi)} L_k \psi\big), \\
b_k(\psi) &= \frac{\eta_k(\psi)}{\sqrt{\jumprate_k(\psi)}} \psi + \frac{e^{i\theta_k(\psi)}}{\sqrt{\jumprate_k(\psi)}} L_k \psi, \qquad k = 1, 2, \cdots, K \\
\eta_k(\psi) &= e^{i \beta_k(\psi)} \sqrt{\jumprate_k(\psi) + \abs{\inneravg{L_k}{\psi}}^2 - \inneravg{L_k^\dagger L_k}{\psi}} - e^{i\theta_k(\psi)} \inneravg{L_k}{\psi} - \sqrt{\jumprate_k(\psi)}, \\
\jumprate_k(\psi) &=  \inneravg{L_k^\dagger L_k}{\psi} - \abs{\inneravg{L_k}{\psi}}^2 + \alpha_k(\psi),\\
\end{aligned}
\right.
\end{align}
where $\alpha_k\ge 0, \beta_k, \theta_k, \gamma_k$ are arbitrary real-valued continuous functions.
\end{corollary}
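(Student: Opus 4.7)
The plan is to reduce the multi-operator corollary to Theorem~\ref{thm::qjp_characterize} via the linearity of the Lindblad generator and the additive structure of both constraints in \eqref{eqn::qjp_condition}. First, I would write $\mathcal{L}(\rho) = -i\comm{H}{\rho} + \sum_{k=1}^{K} \mathcal{L}_k(\rho)$, where $\mathcal{L}_k(\rho) = L_k \rho L_k^\dagger - \tfrac{1}{2}\{L_k^\dagger L_k, \rho\}$ is the single-operator dissipator. For each fixed $k$, Theorem~\ref{thm::qjp_characterize} applied to the single-operator Lindblad equation with operator $L_k$ and zero Hamiltonian supplies a drift $a^{(k)}$, jump $b_k$, and rate $\jumprate_k$ matching the formulas in \eqref{eqn::jump_general_form} with parameters $\alpha_k, \beta_k, \theta_k, \gamma_k$. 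By that theorem these objects satisfy
\begin{align*}
a^{(k)}(\psi)\psi^\dagger + \psi a^{(k)}(\psi)^\dagger + \jumprate_k(\psi)\big(b_k(\psi)\psi^\dagger + \psi b_k(\psi)^\dagger + b_k(\psi) b_k(\psi)^\dagger\big) = \mathcal{L}_k(\psi\psi^\dagger),
\end{align*}
together with $\inner{\psi}{a^{(k)}(\psi)} + c.c. = 0$ and $\norm{b_k(\psi) + \psi} = 1$ on $\{\psi\in\mathcal{S}^n : \jumprate_k(\psi)>0\}$.

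Next I would define the total drift $a(\psi) = -iH\psi + \sum_{k=1}^{K} a^{(k)}(\psi)$, which is precisely the expression in \eqref{eqn::jump_general_form}, and retain each $b_k$ and $\jumprate_k$ unchanged. The Hamiltonian term contributes $(-iH\psi)\psi^\dagger + \psi(-iH\psi)^\dagger = -i\comm{H}{\psi\psi^\dagger}$ to the left-hand side of the unraveling constraint, and $-i\inner{\psi}{H\psi} + i\inner{H\psi}{\psi} = 0$ to the norm identity since $H$ is Hermitian. Summing the per-$k$ identities and adding the Hamiltonian contributions, the linearity of both conditions in \eqref{eqn::qjp_condition} with respect to $a$ then yields the full unraveling constraint and the norm-preservation identity $\inner{\psi}{a(\psi)} + c.c. = 0$. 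The jump-side constraint $\norm{b_k(\psi) + \psi} = 1$ is unaffected by the multi-operator structure, since each $b_k$ is triggered by an independent Poisson process and the constraint must hold separately for each jump.

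Finally, I would note that regularity carries over: given continuity of the parameters $\alpha_k, \beta_k, \theta_k, \gamma_k$ (with $\alpha_k \ge 0$), the continuity of $a$, $\jumprate_k$, and $\sqrt{\jumprate_k}\,b_k$, and the boundedness of $b_k$ on $\{\jumprate_k > 0\}$, all follow from the same argument used for the single-operator case. The only place requiring care — and hence the main (minor) obstacle — is the interpretation of $b_k$ at points where $\jumprate_k(\psi) = 0$; there the $\sqrt{\jumprate_k}$ in the denominator is formal, but since such points contribute no jumps, the formulas should be read as defined almost everywhere on $\{\jumprate_k > 0\}$, exactly mirroring the convention of Theorem~\ref{thm::qjp_characterize}. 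With this understanding, the corollary follows without any further computation.
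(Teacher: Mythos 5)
Your proof is correct and follows essentially the same route the paper intends: the paper disposes of this corollary by noting that one only needs to verify the conditions of \eqref{eqn::qjp_condition}, and your reduction via linearity of the Lindblad generator to the single-operator case of Theorem~\ref{thm::qjp_characterize} is exactly the argument the paper uses for the analogous SDE corollary. The only caveat is that you invoke Theorem~\ref{thm::qjp_characterize} in the sufficiency direction (form implies constraints), whereas its appendix proof is written as a necessity derivation; since the characterization is claimed to be complete and the derivation is reversible, this is consistent with the paper's own usage.
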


The proof is straightforward since we only need to validate that this family of drift and jump terms satisfy \eqref{eqn::qjp_condition}.

\subsection{Dynamically optimal quantum jump process}
\label{sec::do_qjp}

Similar to the SDE case, we seek the optimal parameters $\alpha$, $\beta$, $\theta$, and $\gamma$ that minimize the rate of change of the variance locally in time. Equivalently, we solve the following optimization problem:
\begin{align}
\label{eqn::deri_var_jump}
\begin{aligned}
\mathsf{dV}_{\text{jump}} :=\ & \text{minimize}\ \frac{\ud}{\ud t}\ \ee\  \abs{\inner{\psi(t)}{\obs\psi(t)}}^2\\
&\text{subject to functions } a, b \text{ following the jump process ansatz in } \eqref{eqn::opti_jump}\\
& \qquad \jumprate \text{ is uniformly bounded by } \Lambda,
\end{aligned}
\end{align}
where the positive number $\Lambda > 0$ is a parameter to choose, and we assume that $\Lambda$ is chosen reasonably large enough so that the above $\alpha_k\ge 0$ exists. The reason for imposing such an upper bound constraint is that for the jump process, one requires $\jumprate \Delta t \ll 1$. Since we cannot in practice choose an infinitesimally small $\Delta t$, one would necessarily need a certain  uniform upper bound for the jump rate in practice.

\begin{theorem}
\label{thm::optimal_soln_jump}
For the quantum jump process, 
\begin{itemize}
\item[(i)] For the optimization problem of minimizing the local rate of change of variance, the quantum state diffusion ansatz is always at least as effective as the jump process locally in time:
\begin{align}
\label{eqn::comparison}
\mathsf{dV}_{\text{diffusion}} \le \mathsf{dV}_{\text{jump}}.
\end{align}

\item[(ii)] Without loss of generality, consider $K = 1$. 
The optimization problem \eqref{eqn::deri_var_jump} for a general jump process ansatz in \eqref{eqn::jump_general_form} admits explicit global minimizers $(\alpha^\star, \theta^\star)$, 
where the expression for $\theta^\star$ is given by \eqref{eqn::theta_optimal-proof}, \eqref{eqn::ABC}, and the expression for $\jumprate^\star = \Lambda$ (which immediately determines $\alpha^\star$).
Furthermore, we may set $\gamma = 0$, as it does not affect the variance, and $\beta = 0$, since the loss function in \eqref{eqn::deri_var_jump} depends only on the phase difference $\theta - \beta$.
\end{itemize}
\end{theorem}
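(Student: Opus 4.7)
The plan is to derive $\frac{d}{dt}\ee[\omega(\psi)^2]$ (with $\omega(\psi):=\inneravg{\obs}{\psi}$) in a common form that isolates an ansatz-dependent remainder on top of a Lindblad-only term. By It\^o's formula (diffusion case) and Dynkin's formula for the PDMP generator (jump case), together with $\mathcal{G}\omega = \bra\psi\mathcal L^*(\obs)\ket\psi$ enforced pointwise by the unraveling constraint, I get
\begin{align*}
\tfrac{d}{dt}\ee[\omega^2]\big|_{\mathrm{diff}} &= 2\ee\bigl[\omega\,\bra\psi\mathcal L^*(\obs)\ket\psi\bigr] + 4\sum_k\ee\bigl[(\re{\inner{b_k}{\obs\psi}})^2\bigr],\\
\tfrac{d}{dt}\ee[\omega^2]\big|_{\mathrm{jump}} &= 2\ee\bigl[\omega\,\bra\psi\mathcal L^*(\obs)\ket\psi\bigr] + \sum_k\ee\bigl[\jumprate_k\,(\omega(\psi+b_k)-\omega)^2\bigr].
\end{align*}
The first term is identical in the two lines. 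Theorem~\ref{thm::optimal_soln} already shows DO-QSD kills the diffusion remainder pointwise, so $\mathsf{dV}_{\mathrm{diffusion}}$ equals that first term; the jump remainder is an expectation of a pointwise non-negative integrand, so $\mathsf{dV}_{\mathrm{jump}}\ge\mathsf{dV}_{\mathrm{diffusion}}$, which is part~(i).

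For part~(ii), non-negativity of the jump integrand lets the minimization pass under the expectation, so the problem reduces to a pointwise optimization at each $\psi$. Taking $K=1$, I would re-parameterize admissible jumps by the post-jump unit vector $\phi:=\psi+b$ instead of $(\eta,\beta,\theta)$. Projecting the unraveling constraint in \eqref{eqn::qjp_condition} onto $\psi^\perp$ and matching the two resulting rank-one PSD matrices forces $(I-\psi\psi^\dagger)\phi = (e^{i\xi}/\sqrt{\jumprate})\,v$ with $v:=L\psi-\inneravg{L}{\psi}\psi$, while norm preservation of $\phi$ fixes $|\inner\psi\phi|^2 = 1 - D^2/\jumprate$ where $D^2:=\inneravg{L^\dagger L}{\psi}-|\inneravg{L}{\psi}|^2$. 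Matching this back to \eqref{eqn::opti_jump} identifies $\xi$ with $\theta$ and $\arg\inner\psi\phi$ with $\beta$. Two reductions in the theorem then follow at once: $\gamma$ enters only as an overall phase of $a$ (invisible to $\omega$), and the overall phase of $\phi$ is a gauge freedom, so only $\xi-\arg\inner\psi\phi$ (equivalently $\theta-\beta$) enters $\omega(\phi)$.

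Substituting $\phi$ back into $\omega(\phi)-\omega(\psi)$ produces the key reduction
\[
\sqrt{\jumprate}\bigl(\omega(\phi)-\omega(\psi)\bigr) = A(\jumprate) + B(\jumprate)\cos\bigl(\theta - \beta + q(\psi)\bigr),
\]
with $A(\jumprate) = [\inner{v}{\obs v}-D^2\omega]/\sqrt{\jumprate}$, $B(\jumprate) = 2\sqrt{1-D^2/\jumprate}\,|\inner\psi{\obs v}|$, and $q(\psi)=\arg\inner\psi{\obs v}$ independent of $\jumprate$, $\theta$, $\beta$. Minimizing $(A+B\cos\alpha)^2$ over the phase is elementary: the pointwise minimum equals $(\max\{0,|A|-B\})^2$, attained at the interior $\cos\alpha=-A/B$ when $|A|\le B$ and at a boundary otherwise, which gives the explicit $\theta^\star$ in \eqref{eqn::theta_optimal-proof}, \eqref{eqn::ABC}. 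Since $|A|$ strictly decreases and $B$ strictly increases in $\jumprate$, this pointwise minimum is monotone non-increasing in $\jumprate$, forcing $\jumprate^\star(\psi)\equiv\Lambda$ and correspondingly $\alpha^\star(\psi) = \Lambda - D^2(\psi)$ via the $\jumprate = D^2 + \alpha$ relation. The main obstacle is the algebraic bookkeeping needed to translate between the original parameters $(\alpha,\beta,\theta,\gamma)$ and the post-jump variables $(\phi,\jumprate)$ so that the quoted expressions drop out in the stated form; once that substitution is in hand, the remaining optimization is a one-variable trigonometric exercise.
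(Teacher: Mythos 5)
Your proposal is correct and is, at its core, the same proof as the paper's: both split $\frac{\ud}{\ud t}\ee\abs{\inneravg{\obs}{\psi}}^2$ into a term fixed by the unraveling constraint plus a non-negative ansatz-dependent remainder, note that Theorem~\ref{thm::optimal_soln} annihilates the diffusion remainder pointwise while the jump remainder is a perfect square (giving part (i)), and then reduce part (ii) to a pointwise minimization of $(A+B\cos(\cdot))^2$ over the phase followed by a monotonicity argument in $\jumprate$. The differences are organizational but genuinely clarifying. First, you identify the jump remainder directly as the carr\'e du champ $\sum_k\jumprate_k\big(\inneravg{\obs}{\psi+b_k}-\inneravg{\obs}{\psi}\big)^2$ of the PDMP generator, which makes its non-negativity immediate; the paper arrives at the identical square $\jumprate\big(\inner{b}{\obs b}+\tfrac{2}{\sqrt{\jumprate}}(\re{\eta}\inneravg{\obs}{\psi}+\re{Z})\big)^2$ only after an extended term-by-term collection. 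Second, your reparameterization by the post-jump state $\phi=\psi+b$, with $\inner{\psi}{\phi}=e^{i\beta}\sqrt{1-D^2/\jumprate}$ and $(I-\psi\psi^\dagger)\phi=e^{i\theta}v/\sqrt{\jumprate}$, makes it transparent both that the loss depends only on $\theta-\beta$ and that $\abs{A}$ decreases while $B$ increases in $\jumprate$, which is exactly the monotonicity the paper asserts for $(\abs{\CC_1}-\sqrt{\CF\alpha})/\sqrt{\alpha+\CL}$. Two minor points to tighten: $B$ is only non-decreasing in $\jumprate$ (it is constant when $\inner{\psi}{\obs v}=0$ or $D=0$), which still forces $\jumprate^\star=\Lambda$ but explains the paper's remark that the global minimizer need not be unique; and you should record the feasibility constraint $\jumprate\ge D^2$ (equivalently $\alpha\ge 0$) before invoking monotonicity on the interval $[D^2,\Lambda]$.
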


The detailed proof is deferred to \appref{proof::optimal_soln_jump}. The parameter $\gamma$ can be set to zero, as this phase factor cancels out when reconstructing the density matrix $\ee\big[\psi(t){\psi(t)}^\dagger \big]$. The notable message of the above theorem is that, at least locally, the diffusion ansatz can generally be tuned to perform at least as well as the jump process in terms of local empirical variance growth. Moreover, optimizing the local variance growth for the jump process appears to be substantially more complex than that for the diffusion process, which further motivates favoring the diffusion ansatz for stochastic unraveling. However, this comparison does not account for numerical discretization errors or the need for physical interpretation. The discussion here focuses solely on the local performance of unraveling schemes, assuming sufficiently accurate numerical solvers for the stochastic process. 
In general, there exists some Lindblad equation and certain parameter region of $\Lambda$, so that the above inequality \eqref{eqn::comparison} is strict.

\section{Numerical examples}
\label{sec::numerics}

In this section, we consider Lindblad equations \eqref{eqn::lb} for (a) a 2D decaying model; (b) a cavity QED model for demonstration. We will compare conventional rQSD \eqref{eqn::qsd}, cQSD \eqref{eqn::cqsd}, homodyne \eqref{eqn::homodyne} (with $\theta=0$) and our new scheme DO-QSD \eqref{eqn::do_qsd}, together with a machine learning trained one for the first example using \eqref{eqn::ab_Cn::2} and \eqref{eqn::norm}. The results below demonstrate that the new training-free DO-QSD has overall superior performance compared with conventional choices, and it has comparable and even superior performance compared with the machine learning trained one.

In the numerical experiments, for each unraveling scheme, we use $\numsample$ samples to obtain an estimator $X_i \approx \tr\big(\obs\rho(t_i))$, where $\{t_i\}_{i=0}^{\numtime}$ are time grid points. Then the trajectory-based error and the trajectory-based variance are defined as follows:
\begin{align}
\label{eqn::error_var}
\left\{\begin{aligned}
\text{Trajectory error} &= \frac{1}{\numtime+1} \sum_{i=0}^{\numtime}  \abs{X_i - \tr\big(\obs\rho(t_i))}, \\
\text{Averaged var} &=  \frac{1}{\numtime+1}  \sum_{i=0}^{\numtime} \text{Var}(X_i).
\end{aligned}\right.
\end{align}
The SDE \eqref{eqn::ab_Cn::2} is simulated using Platen's weak second-order algorithm \cite[Chp.~5.2.A]{kloeden}; since the norm is theoretically always preserved for the above mentioned choices of unraveling scheme, we enforce the normalization as a post-processing step to maintain numerical stability. For these different unraveling schemes, we fix the same time step and the same amount of sample size, and then we estimate the above two quantities to compare their performance in terms of the actual error (influenced by both bias and the stochastic fluctuation) and the variance (mainly from the stochastic fluctuation). To ensure robustness, we repeat these experiments $10$ times and will report the boxplot of trajectory error and averaged variance in \eqref{eqn::error_var}.

We would like to acknowledge that the extent that the variance reduction could be achieved is certainly problem dependent, which relies on all three factors: initial condition, Lindblad equation, as well as the observable; when the increment of variance due to the dynamics itself (cf. \eqref{eqn::p_optimal}) is dominating compared to the reduction due to choice of unraveling, then there is indeed little room for variance reduction. For these two benchmark examples, we observe a promising performance for DO-QSD for certain parameter regions and observables. 
Since the expression of DO-QJP is complex and also due to the theoretical guarantee in Theorem~\ref{thm::optimal_soln_jump}, we mostly test the diffusion-based ansatz herein. However, we will compare our approach (DO-QSD) with the standard implementation of QuTiP \cite{lambert_qutip_2026} using the function called mcsolve, which implements the conventional quantum jump process \eqref{eqn::WFMC}. 
As a remark, due to the Monte Carlo nature of sampling method, to achieve the same precision, the amount of saving could be roughly captured by the amount of reduction in variance, provided that we use a high-order efficient numerical solver.

\subsection{A 2D decaying model}
\label{subsec::eg1}

We begin by examining a simple two-level Lindblad system, adapted from Eq.~(3.219) of \cite{breuer_theory_2007}:
\begin{align}
\label{eg::two_decay}
H = 0,\qquad  
L_1 = \sqrt{\lambda_0(\nu+1)}\,\sigma_{-},\qquad  
L_2 = \sqrt{\lambda_0 \nu}\,\sigma_{+}.
\end{align}
For the experiments, we set $\lambda_0 = 5$, $\nu = 0.5$, use the time interval $[0,2]$, and choose the initial state $\rho_0 = \ket{+}\bra{+}$.  
Figure~\ref{fig::eg1} compares rQSD~\eqref{eqn::qsd}, cQSD~\eqref{eqn::cqsd}, the trained unraveling scheme (obtained by training $h_k$ and $\theta_k$ in \eqref{eqn::ab_Cn::2}), and DO-QSD~\eqref{eqn::do_qsd}. The detailed training procedure and hyper-parameters are deferred to \appref{sec::ml}.

As shown in Figure~\ref{fig::eg1::var}, DO-QSD consistently achieves the lowest trajectory error among rQSD and cQSD across different time steps.
Because the value for observable $\sigma_Y$ yields a trivial zero value, it is omitted from the comparison.
Figure~\ref{fig::eg1::var} also illustrates the empirical variance.  
Here, one can observe that DO-QSD largely reduces the variance, and performs comparably to the trained scheme, whose loss function is explicitly designed to minimize the empirical variance.  
Although this training-free method is not guaranteed to be globally optimal with respect to the averaged variance in \eqref{eqn::error_var}, its competitive performance is encouraging.

\begin{figure}[h!]
\centering
\begin{subfigure}[b]{\figwidth\textwidth}
\includegraphics[width=\textwidth]{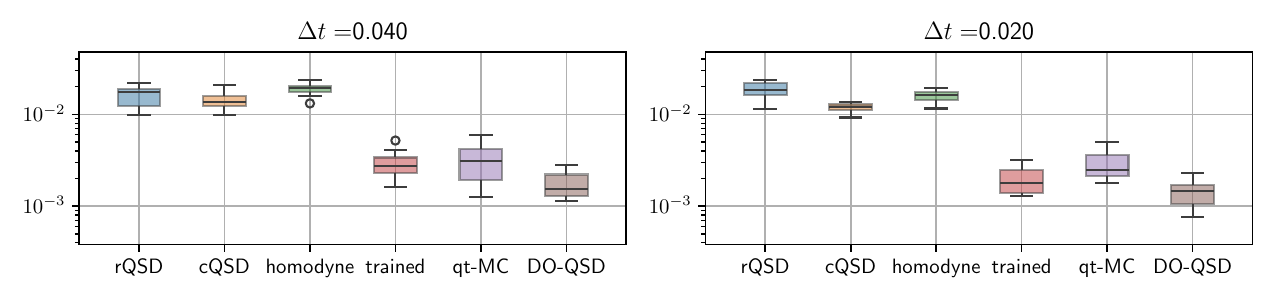}
\caption{Observable $\sigma_X$}
\end{subfigure}
~
\begin{subfigure}[b]{\figwidth\textwidth}
\includegraphics[width=\textwidth]{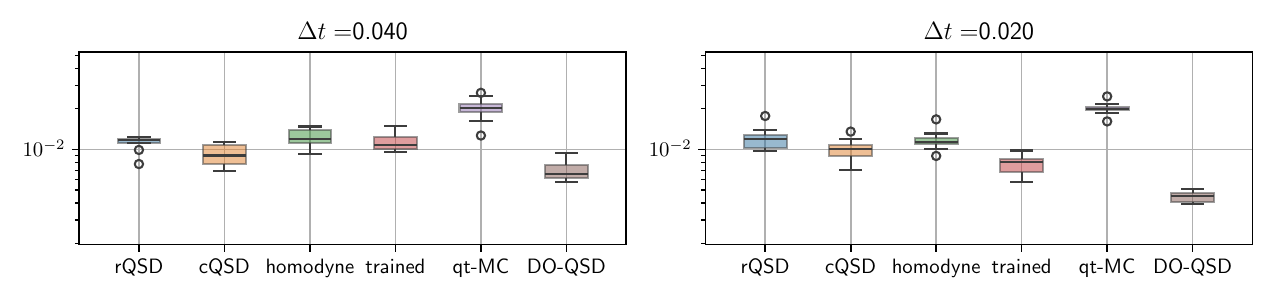}
\caption{Observable $\sigma_Z$}
\end{subfigure}
\caption{\textbf{Trajectory errors} for various unraveling schemes applied to the 2D decaying model~\eqref{eg::two_decay}.  
The rQSD~\eqref{eqn::qsd}, cQSD~\eqref{eqn::cqsd}, homodyne~\eqref{eqn::homodyne} (with $\theta=0$) methods serve as standard baselines.
The \emph{trained} method denotes the scheme obtained by training $h_k$ and $\theta_k$ in~\eqref{eqn::ab_Cn::2} with the neural-network parameterization.
The \emph{qt-MC} method uses the default solver \enquote{qutip.mcsolve} \cite{lambert_qutip_2026}, which implements quantum jump process \eqref{eqn::WFMC}.
DO-QSD (ours) corresponds to~\eqref{eqn::do_qsd}.  
Results are based on a sample size of $\numsample = 10^3$, and the boxplot summarizes $10$ independent runs.}
\label{fig::eg1}
\end{figure}

\begin{figure}[h!]
\centering
\begin{subfigure}[b]{\figwidth\textwidth}
\includegraphics[width=\textwidth]{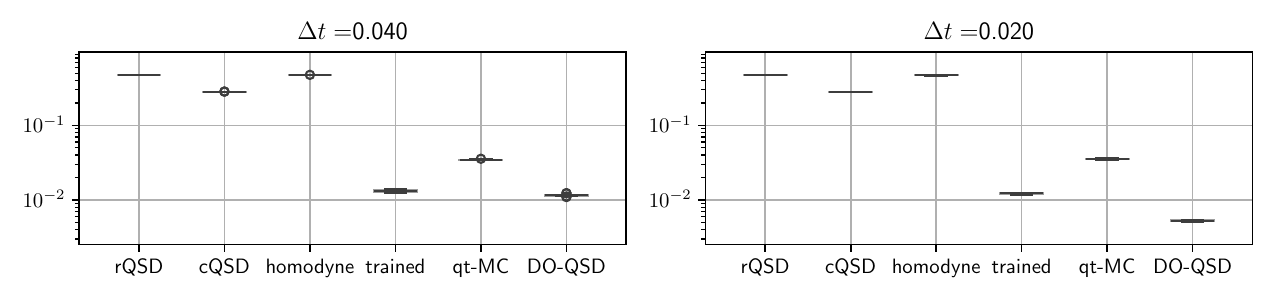}
\caption{Observable $\sigma_X$}
\end{subfigure}
~
\begin{subfigure}[b]{\figwidth\textwidth}
\includegraphics[width=\textwidth]{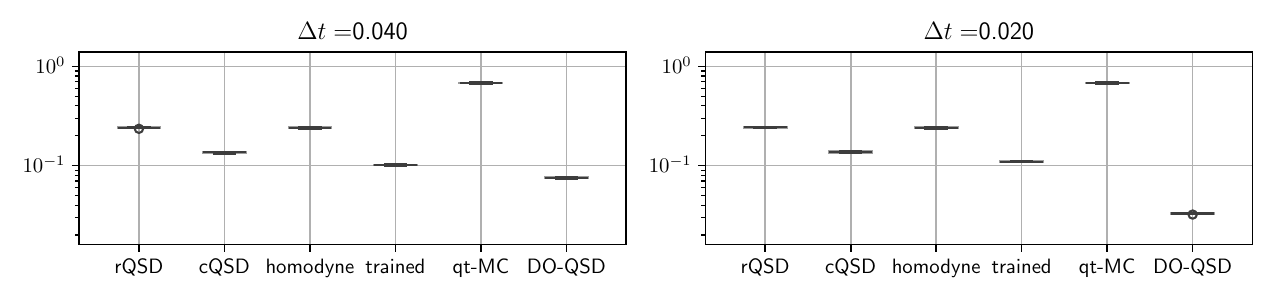}
\caption{Observable $\sigma_Z$}
\end{subfigure}
\caption{\textbf{Averaged variance} for various unraveling schemes applied to the 2D decaying model~\eqref{eg::two_decay} for the same experiments as reported in Figure~\ref{fig::eg1}.}
\label{fig::eg1::var}
\end{figure}

\subsection{A cavity QED model}

In the next example,
we consider the following composite system of a two-level atom and a photon field, adapted from \cite{briegel_quantum_1993}:
\begin{align}
\label{eg::atom::H}
H = \unit_{\text{atom}} \otimes \hat{a}^\dagger \hat{a} + \sigma_Z \otimes \unit_{\text{ph}} - (\sigma_{-} \otimes \hat{a}^\dagger + \sigma_{+} \otimes \hat{a}\big) . 
\end{align}
Here, $\hat{a}^{(\dagger)}$ are the creation and annihilation operators for the photon field, the Pauli matrices act on the two-level atom, and $\unit_{\text{atom}}$ and $\unit_{\text{ph}}$ are the identity operators for the atomic and photonic subsystems, respectively. For simplicity, the frequency and coupling strength parameters in the Hamiltonian are set to unity.

The Lindbladian part consists of five terms:
\begin{align}
\label{eqn::atom::L1}
L_1 = \unit_{\text{atom}} \otimes \sqrt{\mu_1 (\nu+1)} \hat{a}, \qquad L_2 = \unit_{\text{atom}} \otimes \sqrt{\mu_1 \nu} \hat{a}^\dagger ,
\end{align}
which describes the interaction of the photon in the cavity with a thermal reservoir, and
\begin{align}
\label{eqn::atom::L4}
L_3 = \sqrt{\mu_2 (1-\kappa)} \sigma_{-} \otimes \unit_{\text{ph}}, \qquad L_4 = \sqrt{\mu_2 \kappa} \sigma_{+} \otimes \unit_{\text{ph}}, \qquad L_5 = \sqrt{\mu_3} \sigma_Z \otimes \unit_{\text{ph}} ,
\end{align}
which models the dynamics of the atomic subsystem. For the simulations, we choose the parameters $\nu = \kappa = 1/2$ and $\mu_1 = \mu_2 = \mu_3 = 0.6$. The initial state is taken as the product state $\ket{+}\bra{+}\otimes \ket{0}\bra{0}$, meaning the atom starts in the maximally entangled state and the photon field is initially in the vacuum state. The photon field is truncated to dimension $10$, resulting in a total system dimension of $20$, and the time interval is $[0,2]$.

Simulation errors for various observables are shown in Figure~\ref{fig::eg2}, and the corresponding variances in Figure~\ref{fig::eg2::var}. For the observable $\sigma_Z\otimes \unit_{\text{ph}}$, the error is significantly reduced using the same number of samples in the stochastic unraveling. For $\unit_{\text{atom}} \otimes \hat{a}^\dagger \hat{a}$ and $\sigma_X\otimes \unit_{\text{ph}}$, the reduction in error is less significant. For  $\sigma_Y\otimes \unit_{\text{ph}}$, there is very little reduction in error. Overall, the extent of reduction in error is consistent with the extent of reduction in variance, as shown in Figure~\ref{fig::eg2::var}.
These results confirm that the extent of error reduction is problem-dependent, as discussed previously. We emphasize that, for different choices, DO-QSD consistently provides performance comparable to conventional unraveling schemes such as rQSD and cQSD. In some cases, a more significant improvement is possible, depending on the specific problem.

\begin{figure}
\centering
\begin{subfigure}[b]{0.9\textwidth}
\includegraphics[width=\textwidth]{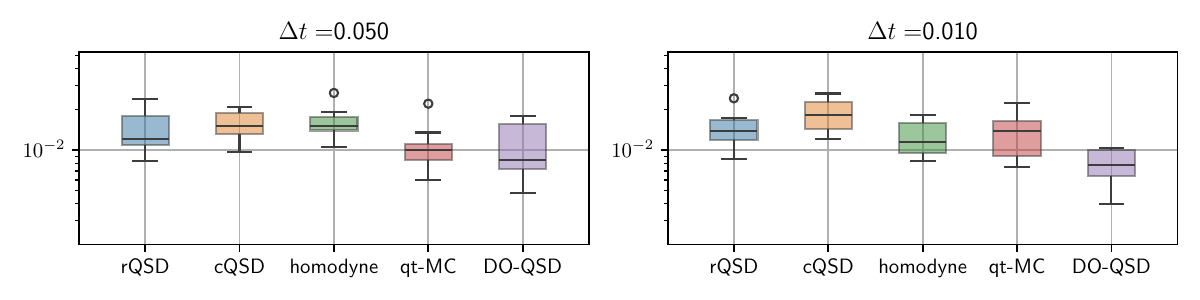}
\caption{Observable $\sigma_X\otimes \unit_{\text{ph}}$}
\end{subfigure}
\\
\begin{subfigure}[b]{0.9\textwidth}
\includegraphics[width=\textwidth]{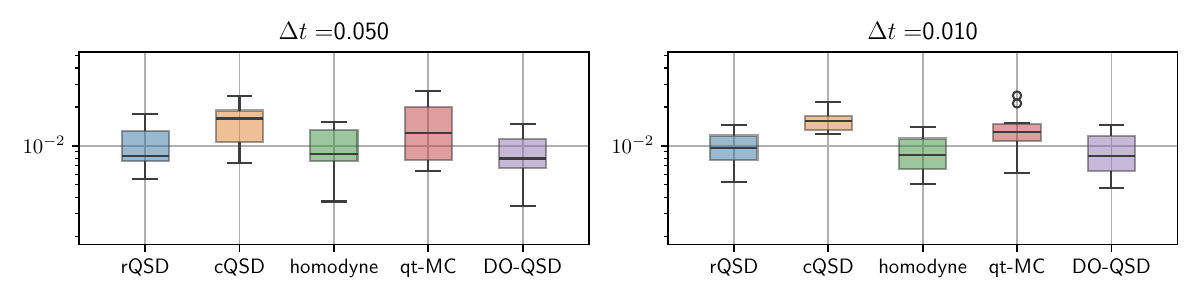}
\caption{Observable $\sigma_Y\otimes \unit_{\text{ph}}$}
\end{subfigure}
\\
\begin{subfigure}[b]{0.9\textwidth}
\includegraphics[width=\textwidth]{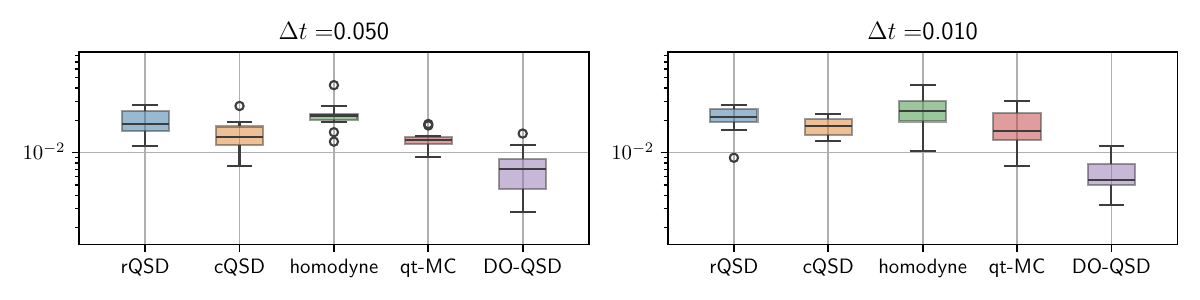}
\caption{Observable $\sigma_Z\otimes \unit_{\text{ph}}$}
\end{subfigure}
\\
\begin{subfigure}[b]{0.9\textwidth}
\includegraphics[width=\textwidth]{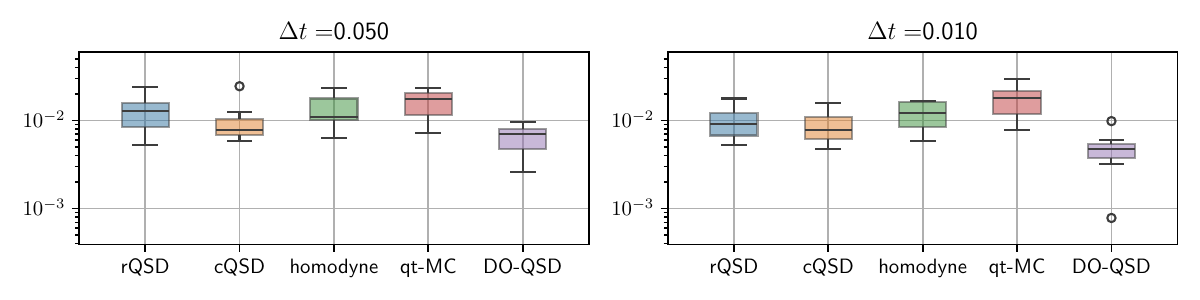}
\caption{Observable $\unit_{\text{atom}} \otimes \hat{a}^\dagger \hat{a}$}
\end{subfigure}
\caption{\textbf{Trajectory errors} for various unraveling schemes applied to the atom-cavity model~\eqref{eg::atom::H}, \eqref{eqn::atom::L1}, \eqref{eqn::atom::L4}.
The rQSD~\eqref{eqn::qsd}, cQSD~\eqref{eqn::cqsd}, homodyne~\eqref{eqn::homodyne} (with $\theta=0$) methods serve as standard baselines.
The \emph{qt-MC} method uses the default solver \enquote{qutip.mcsolve} \cite{lambert_qutip_2026}, which implements quantum jump process \eqref{eqn::WFMC}.
DO-QSD (ours) corresponds to~\eqref{eqn::do_qsd}.  
Results are based on a sample size of $\numsample = 500$, and the boxplot summarizes $10$ independent runs.}
\label{fig::eg2}
\end{figure}

\begin{figure}
\centering
\begin{subfigure}[b]{0.9\textwidth}
\includegraphics[width=\textwidth]{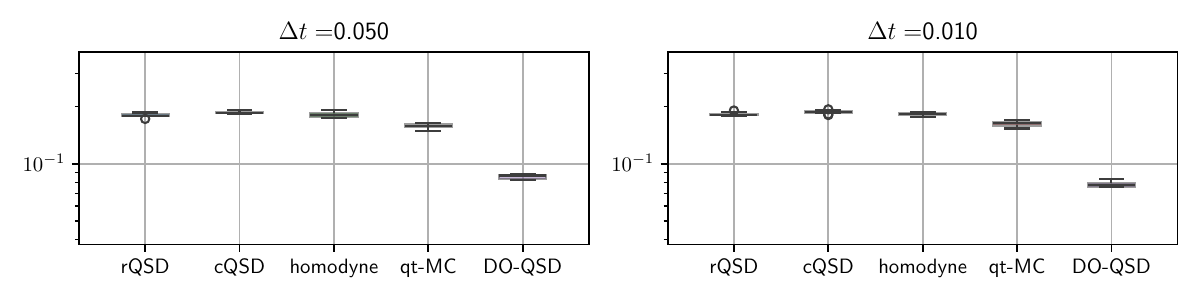}
\caption{Observable $\sigma_X\otimes \unit_{\text{ph}}$}
\end{subfigure}
\\
\begin{subfigure}[b]{0.9\textwidth}
\includegraphics[width=\textwidth]{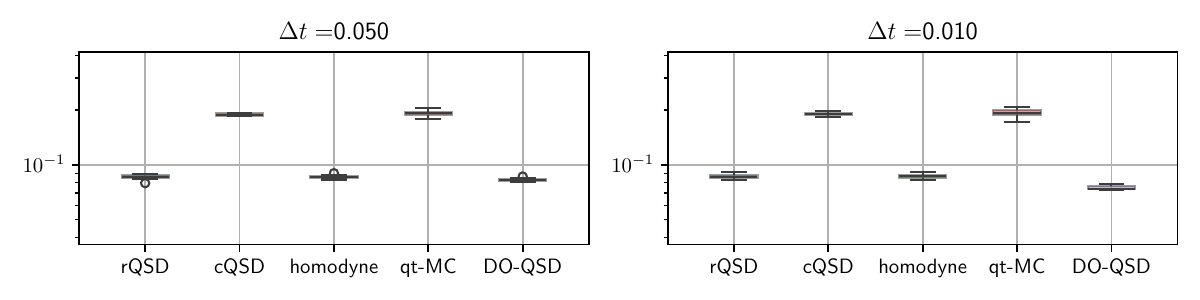}
\caption{Observable $\sigma_Y\otimes \unit_{\text{ph}}$}
\end{subfigure}
\\
\begin{subfigure}[b]{0.9\textwidth}
\includegraphics[width=\textwidth]{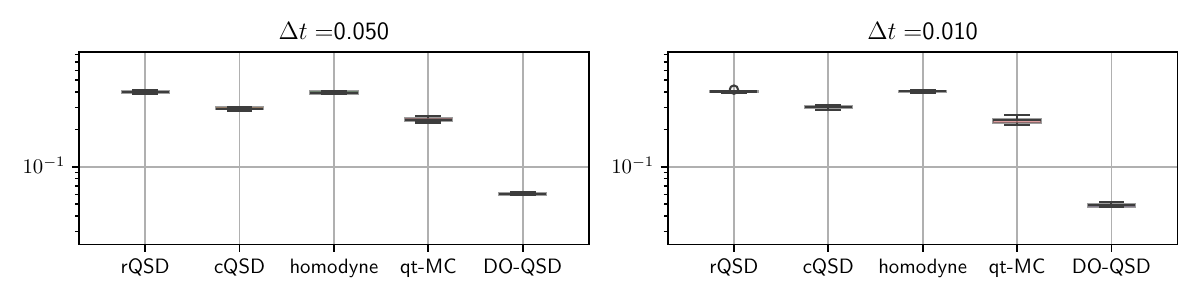}
\caption{Observable $\sigma_Z\otimes \unit_{\text{ph}}$}
\end{subfigure}
\\
\begin{subfigure}[b]{0.9\textwidth}
\includegraphics[width=\textwidth]{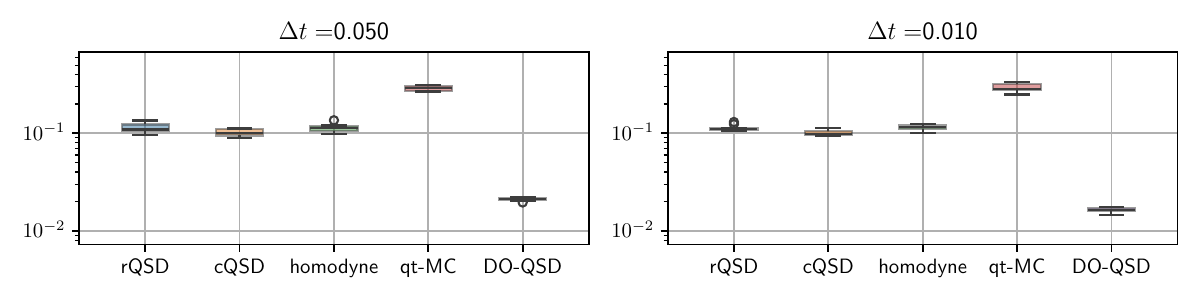}
\caption{Observable $\unit_{\text{atom}} \otimes \hat{a}^\dagger \hat{a}$}
\end{subfigure}
\caption{\textbf{Averaged variance} for various unraveling schemes applied to the atom-cavity model~\eqref{eg::atom::H}, \eqref{eqn::atom::L1}, \eqref{eqn::atom::L4},
for the same experiments as reported in Figure~\ref{fig::eg2}.}
\label{fig::eg2::var}
\end{figure}

\section{Conclusion and outlook}

In this work, we present a complete characterization of stochastic unraveling schemes for a specific scenario, and derive the optimal unraveling scheme for minimizing the rate of change of the classical variance of observables. Here, the classical variance refers to the variance arising from classical computation, rather than the variance from quantum measurement. 
Depending on the problem, the proposed dynamically optimal quantum state diffusion (DO-QSD) may substantially reduce variance, yielding a lower mean-squared error in classical simulations; in general, it performs at least as well as conventional quantum state diffusion.
Furthermore, it can be combined with other sampling techniques, such as control variate \cite{le_bris_adaptive_2015} and multi-level Monte Carlo \cite{giles_multilevel_2015}, among others. We emphasize that the variance reduction in our approach is achieved by selecting an efficient stochastic process, making this method complementary to other sampling strategies rather than serving as an alternative.

Although the DO-QSD may not be globally optimal, the variance reduction that it offers does not require any training, which is typically prohibitively expensive for high-dimensional quantum many-body systems. It remains an open question whether a globally optimal unraveling scheme can be expressed in an explicit form.
The benefits of choosing dynamically optimal unraveling will be more pronounced when the bias from numerical discretization is not significant, which leads to more demand for efficient high-order numerical solvers \cite{steinbach_high-order_1995,kloeden,li2020exponential}. How to choose a better scheme with complexity cost into consideration is not fully clear yet. These questions will be left as future endeavors.

Stochastic unraveling methods are not limited to Markovian quantum dynamics. Non-Markovian dynamics can be described by stochastic differential equations with memory and colored noise \cite{gaspard1999non}, or embedded into a higher-dimensional Markovian stochastic system \cite{strunz1999open,li2021markovian}. Extending these unraveling methods to non-Markovian settings is a promising direction for future work.

\section*{Acknowledgement}

Y.C. was sponsored by Shanghai Pujiang Program (23PJ1404600), NSFC grant No. 12341104, Shanghai Science and Technology Innovation Action Plan (24LZ1401200), and Shanghai Jiao Tong University 2030 Initiative.

\bibliography{ref.bib}

\appendix
\section{Proofs for \secref{sec::sde}}

\subsection{Proof of Theorem~\ref{lem::characterization}}
\label{proof::lem::characterization}

\smallskip
{\noindent {\bf Proof of part (i):} }\\

The main approach is to calculate the matrix elements of \eqref{eqn::unraveling_condition} in the basis formed by $\psi$ and $L\psi$, as well as the orthogonal subspace of $\text{span}\{\psi, L\psi\}$.
To begin with, compared to LQSD, we denote the extra terms in the general ansatz as $A$ and $B$, defined as
\begin{align}
\label{eqn::AB}
\begin{aligned}
A(\psi) &= a(\psi) - \Big( -i {H} - \frac{1}{2} L^\dagger L \Big) \psi ,\qquad 
B(\psi) = b(\psi) - L \psi.
\end{aligned}
\end{align}
From \eqref{eqn::unraveling_condition}, the difference terms $A$ and $B$ must satisfy 
\begin{align}
\label{eqn::new_relation}
\begin{aligned}
& A(\psi) \psi^\dagger + \psi A(\psi)^\dagger  + B(\psi) (L\psi)^\dagger + L\psi B(\psi)^\dagger + B(\psi) B^\dagger(\psi) = 0, \qquad \forall \psi \in \mathcal{M}.
\end{aligned}
\end{align}

Without loss of generality, suppose $\psi \neq 0$, and $\psi$ is not an eigen-vector of $L$.
The reason is that within any compact subspace of $\Complex^n$, the measure of the set of eigen-vectors $\psi$ of $L$ has measure zero; if the measure is not zero, it means that $L$ is proportional to the identity matrix, and the dynamics reduces to von-Neumann equation and the system is basically closed (which is not our setup). Then the following conclusion holds almost everywhere on any compact subset of $\Complex^n$.
In the following, we shall fix such a $\psi$. One can find an orthogonal basis of $\Complex^n$ such that
\begin{align}
\label{eqn::basis}
\phi_1 = \psi, \qquad \phi_2 = L \psi - \frac{\inner{\psi}{L\psi}}{\norm{\psi}^2}\psi,
\end{align}
and all remaining basis vectors can be constructed via the Gram-Schmidt approach. Note that these $\phi_1, \phi_2$ basis vectors would depend on $\psi$ implicitly, but for ease of notation, we will not explicitly indicate that, and the same applies to some other notations below. 

Let $\mathcal{Q}$ be the projection operator to the subspace constructed via $\phi_3, \phi_4, \cdots, \phi_n$. We can decompose $A(\psi)$ and $B(\psi)$ into 
\begin{align*}
A(\psi) = \alpha_1(\psi) \phi_1 + \alpha_2(\psi) \phi_2 + A_{R}(\psi),\\
B(\psi) = \beta_1(\psi) \phi_1 + \beta_2(\psi) \phi_2 + B_R(\psi),
\end{align*}
where $\mathcal{Q} A_R = A_R$ and $\mathcal{Q} B_R = B_R$. By plugging these decompositions into the above equation \eqref{eqn::new_relation}, we have
\begin{align*}
0 =& \big(\alpha_1(\psi) \phi_1 + \alpha_2(\psi) \phi_2 + A_{R}(\psi)\big) \phi_1^\dagger  \\
&\qquad + \phi_1 \big(\alpha_1(\psi)^* \phi_1^\dagger + \alpha_2(\psi)^* \phi_2^\dagger + A_{R}^\dagger(\psi)\big) \\
&\qquad + \big(\beta_1(\psi) \phi_1 + \beta_2(\psi) \phi_2 + B_R(\psi)\big) \big(\phi_2^\dagger + \frac{\inner{\psi}{L\psi}^*}{\norm{\psi}^2} \phi_1^\dagger\big) \\
&\qquad + \big(\phi_2 + \frac{\inner{\psi}{L\psi}}{\norm{\psi}^2} \phi_1\big) \big(\beta_1(\psi)^* \phi_1^\dagger + \beta_2(\psi)^* \phi_2^\dagger + B_R^\dagger(\psi)\big) \\
&\qquad + \big(\beta_1(\psi) \phi_1 + \beta_2(\psi) \phi_2 + B_R(\psi)\big) \big(\beta_1(\psi)^* \phi_1^\dagger + \beta_2(\psi)^* \phi_2^\dagger + B_R^\dagger(\psi)\big).
\end{align*}
By collecting matrix elements in terms of the basis $\big\{\phi_1, \phi_2, \cdots, \phi_n\big\}$, one has
\begin{align*}
\begin{array}{lll}
\inner{\phi_1}{(\cdot) \phi_1} =0 & \implies \qquad & \alpha_1 + \alpha_1^* + \beta_1 \frac{\inner{\psi}{L\psi}^*}{\norm{\psi}^2} + \beta_1^* \frac{\inner{\psi}{L\psi}}{\norm{\psi}^2} + \abs{\beta_1}^2 = 0,\\
\inner{\phi_2}{(\cdot) \phi_1} =0 & \implies \qquad & \alpha_2 + \beta_2 \frac{\inner{\psi}{L\psi}^*}{\norm{\psi}^2} + \beta_1^* + \beta_2 \beta_1^* = 0, \\
\mathcal{Q} {(\cdot) \phi_1} = 0 & \implies \qquad & A_R + B_R \frac{\inner{\psi}{L\psi}^*}{\norm{\psi}^2} + B_R \beta_1^* = 0, \\
\inner{\phi_2}{(\cdot) \phi_2} = 0 & \implies \qquad & \beta_2 + \beta_2^* + \abs{\beta_2}^2 = 0, \\
\mathcal{Q} (\cdot) \phi_2 = 0 & \implies \qquad & B_R + B_R \beta_2^* = 0, \\
\mathcal{Q} (\cdot) \mathcal{Q} = 0 & \implies \qquad & B_R B_R^\dagger = 0.
\end{array}
\end{align*}
In the above, we suppressed the dependence on $\psi$ for ease of notation.

From the sixth line above, we immediately know that $B_R = 0$, and thus the fifth line automatically holds. Then the third line yields $A_R = 0$. As for the fourth line, it can be simplified as follows:
\begin{align*}
\abs{\beta_2(\psi) + 1}^2 = 1.
\end{align*}
By direct simplification, we have
\begin{align}
\label{eqn::relation}
\begin{aligned}
& 2 \re{\alpha_1(\psi)} = - \beta_1(\psi) \frac{\inner{\psi}{L\psi}^*}{\norm{\psi}^2} - \beta_1(\psi)^*  \frac{\inner{\psi}{L\psi}}{\norm{\psi}^2} - \abs{\beta_1(\psi)}^2, \\
& \alpha_2(\psi) = - \beta_2(\psi) \frac{\inner{\psi}{L\psi}^*}{\norm{\psi}^2} - \beta_1(\psi)^* - \beta_2(\psi) \beta_1(\psi)^*, \\
& \abs{\beta_2(\psi) + 1}^2 = 1, \\
& A_R(\psi) = B_R(\psi) = 0.
\end{aligned}
\end{align}
We could simply parameterize $\beta_2(\psi)$ via $-1 + e^{i \theta(\psi)}$, where $\theta$ is a real-valued function.
$\beta_1$ is generally a complex-valued function and the imaginary parts of $\alpha_1$ is denoted as $\gamma$ (which is a real-valued function).

Finally, by combining \eqref{eqn::relation} and \eqref{eqn::basis}, we arrive at the following equivalent class:
\begin{align*}
A(\psi) = &\ \big(- \frac{1}{2} \beta_1(\psi) \frac{\inner{\psi}{L\psi}^*}{\norm{\psi}^2} - \frac{1}{2} \beta_1(\psi)^*  \frac{\inner{\psi}{L\psi}}{\norm{\psi}^2} - \frac{1}{2} \abs{\beta_1(\psi)}^2 + i \gamma(\psi)\big) \psi \\
& \qquad + \big(- (-1+e^{i\theta(\psi)}) \frac{\inner{\psi}{L\psi}^*}{\norm{\psi}^2} - \beta_1(\psi)^* e^{i \theta(\psi)}\big) \big(L \psi - \frac{\inner{\psi}{L\psi}}{\norm{\psi}^2}\psi\big) \\
=&\ \bigg(-\frac{1}{2} \abs{\beta_1(\psi) + \frac{\inner{\psi}{L\psi}}{\norm{\psi}^2}}^2 -\frac{1}{2} \abs{\frac{\inner{\psi}{L\psi}}{\norm{\psi}^2}}^2 + e^{i \theta(\psi)} \frac{\inner{\psi}{L\psi}}{\norm{\psi}^2} \big(\beta_1(\psi)^* + \frac{\inner{\psi}{L\psi}^*}{\norm{\psi}^2}) + i \gamma(\psi) \bigg) \psi  \\
&\qquad +  \big( (1-e^{i\theta(\psi)}) \frac{\inner{\psi}{L\psi}^*}{\norm{\psi}^2} - \beta_1(\psi)^* e^{i \theta(\psi)}\big) \big(L \psi\big), \\
B(\psi) = &\ \beta_1(\psi) \psi + \big(-1 + e^{i \theta(\psi)}\big) \big(L \psi - \frac{\inner{\psi}{L\psi}}{\norm{\psi}^2}\psi\big) \\
=&\ \Big(\beta_1(\psi) + \big(1 - e^{i \theta(\psi)}\big) \frac{\inner{\psi}{L\psi}}{\norm{\psi}^2}\Big)\psi + \big(-1 + e^{i \theta(\psi)}\big) \big(L \psi\big).
\end{align*}
We can observe that it is easier to parameterize $\beta_1(\psi) = \eta(\psi) - \frac{\inner{\psi}{L\psi}}{\norm{\psi}^2}$. Then we can simplify the parameterization via
\begin{align*}
A(\psi) &= \bigg(-\frac{1}{2} \abs{\eta(\psi)}^2 - \frac{1}{2} \abs{\frac{\inner{\psi}{L\psi}}{\norm{\psi}^2}}^2 + e^{i \theta(\psi)} \frac{\inner{\psi}{L\psi}}{\norm{\psi}^2} \eta(\psi)^* + i \gamma(\psi) \bigg) \psi +  \big( \frac{\inner{\psi}{L\psi}^*}{\norm{\psi}^2} -e^{i\theta(\psi)} \eta(\psi)^* \big) \big(L \psi\big)\\
&\qquad = \bigg(-\frac{1}{2} \abs{\eta(\psi) - \frac{e^{i\theta(\psi)}\inner{\psi}{L\psi}}{\norm{\psi}^2}}^2 + i \tilde{\gamma}(\psi) \bigg) \psi +  \big( \frac{\inner{\psi}{L\psi}^*}{\norm{\psi}^2} -e^{i\theta(\psi)} \eta(\psi)^* \big) \big(L \psi\big), \\
B(\psi) &= \Big(\eta(\psi) - e^{i \theta(\psi)}\frac{\inner{\psi}{L\psi}}{\norm{\psi}^2} \Big)\psi + \big(-1 + e^{i \theta(\psi)}\big) \big(L \psi\big),
\end{align*}
where $\tilde{\gamma}(\psi) = \gamma(\psi) + \im{e^{i\theta(\psi)} \frac{\inner{\varphi}{L\varphi}}{\norm{\varphi}^2} \eta(\psi)^*}$ is still an arbitrary real-valued function. 
Since there is certain repetitive pattern in $\eta$, we can introduce $\tilde{\eta}(\psi) = \eta(\psi) - e^{i \theta(\psi)}\frac{\inner{\psi}{L\psi}}{\norm{\psi}^2}$, and then we end up with 
\begin{align*}
A(\psi) &= \bigg(-\frac{1}{2} \abs{\tilde{\eta}(\psi)}^2 + i \tilde{\gamma}(\psi) \bigg) \psi - e^{i \theta(\psi)} \tilde{\eta}(\psi)^* \big(L \psi\big)\\
B(\psi) &= \tilde{\eta}(\psi)\ \psi + \big(-1 + e^{i \theta(\psi)}\big) \big(L \psi\big).
\end{align*}

By returning to the original drift and diffusion terms using \eqref{eqn::AB}, we obtain 
\begin{align*}
\begin{aligned}
a(\psi) &= -i {H}\psi - \frac{1}{2} L^\dagger L\psi + \bigg(-\frac{1}{2} \abs{\tilde{\eta}(\psi)}^2 + i \tilde{\gamma}(\psi) \bigg) \psi - e^{i \theta(\psi)} \tilde{\eta}(\psi)^* \big(L \psi\big), \\
b(\psi) &= \tilde{\eta}(\psi)\ \psi + e^{i \theta(\psi)} \big(L \psi\big).
\end{aligned}
\end{align*}
By dropping the notation of tilde, we end up with the simplified expression in \eqref{eqn::ab_Cn}.

\bigskip
{\noindent {\bf Proof of part (ii):}}\\

We will verify directly that the norm is always preserved in the above choice \eqref{eqn::ab_Cn} when restricted to the space of the unit sphere $\mathcal{S}^n$.
Let us consider the SDE \eqref{eqn::sde_1} with $\norm{\psi(t)}^2 = 1$ at time $t$. We want to validate that the norm won't change:
\begin{align*}
\ud \inner{\psi}{\psi} =&\ \innerbig{\psi}{-i {H}\psi - \frac{1}{2} L^\dagger L\psi + \bigg(-\frac{1}{2} \abs{{\eta(\psi)}}^2 + i {\gamma}(\psi) \bigg) \psi - e^{i \theta(\psi)} \eta(\psi)^* L \psi}\ \ud t + c.c.  \\
&\qquad + \innerbig{\psi}{\eta(\psi) \psi + e^{i \theta(\psi)} L\psi}\ \ud W + c.c.  \\
&\qquad + \innerbig{\eta(\psi)\ \psi + e^{i \theta(\psi)} L \psi}{\eta(\psi)\ \psi + e^{i \theta(\psi)} L \psi}\ \ud t \\
=&\ \Big(- \norm{L\psi}^2 - \abs{\eta(\psi)}^2 - e^{i\theta(\psi)} \eta(\psi)^* \inner{\psi}{L\psi} - e^{-i\theta(\psi)} \eta(\psi) \inner{L\psi}{\psi} \\
&\qquad + \abs{\eta(\psi)}^2 + \eta(\psi)^* e^{i \theta(\psi)} \inner{\psi}{L\psi} + \eta(\psi) e^{-i\theta(\psi)}\inner{L\psi}{\psi} + \norm{L\psi}^2\Big)\ \ud t\\
& \qquad + \big(\eta(\psi) + \eta(\psi)^* + e^{i\theta(\psi)} \inner{\psi}{L\psi} + e^{-i\theta(\psi)} \inner{L\psi}{\psi}\big) \ud W \\
=&\ \big(\eta(\psi) + \eta(\psi)^* + e^{i\theta(\psi)} \inner{\psi}{L\psi} + e^{-i\theta(\psi)} \inner{L\psi}{\psi}\big) \ud W.
\end{align*}
We suppressed the time-dependence $\psi = \psi(t)$ for simplicity; the term $c.c.$ means complex conjugate of the previous term. The stochastic fluctuation won't affect it iff
$
\eta(\psi) + e^{i\theta(\psi)} \inner{\psi}{L\psi}
$
is purely imaginary, and thus we can parameterize it via
\begin{align*}
\eta(\psi) = i h(\psi) - e^{i\theta(\psi)} \inner{\psi}{L\psi},
\end{align*}
where $h$ is an arbitrary real-valued function.

\subsection{Proof of Theorem~\ref{thm::optimal_soln}}
\label{proof::equiv_opti}

We first prove the main theorem based on Lemma~\ref{lem::equiv_opti} and then we will come back to prove this lemma.\\

{\noindent {\textbf{Proof of Theorem~\ref{thm::optimal_soln}}:}} \\

Due to the linearity, we only need to consider the form \eqref{eqn::sde_1}. 
\begin{lemma}
\label{lem::equiv_opti}
For the second-order moment, one has
\begin{align*}
&\ \frac{\ud}{\ud t} \ee \abs{\inner{\psi(t)}{\obs\psi(t)}}^2 
=\ 2\ \ee \Big[\innerbig{\obs \psi}{\mathcal{L}(\psi\psi^\dagger) \obs \psi}\Big]  + 2\ \ee\Big[\inner{L\psi}{\obs L\psi}\inner{\psi}{\obs\psi} \Big]  - 2\ \ee\Big[\abs{\inner{\obs\psi}{L\psi}}^2\Big]  + 4\ \loss_1,
\end{align*}
where $\psi = \psi(t)$, and the loss function $\loss_1$ is defined as
\begin{align}
\label{eqn::loss1}
\begin{aligned}
\loss_1 :=&\ \ee\ \absBig{\reBig{\eta(\psi) \inner{\psi}{\obs\psi} + e^{i \theta(\psi)} \inner{\obs \psi}{L\psi}}}^2.
\end{aligned}
\end{align}
Therefore, the above optimization problem \eqref{eqn::deri_var} is equivalent to minimize the loss $\loss_1$.

If we further additional impose the norm-preservation \eqref{eqn::norm}, then $\re{\eta(\psi)} =  - \re{e^{i\theta(\psi)} \inner{\psi}{L\psi}}$ and the loss function becomes
\begin{align}
\begin{aligned}
\loss_2 :=&\  \ee\ \absBig{\reBig{e^{i \theta(\psi)}  \big(\inner{\obs \psi}{L\psi} - \inner{\psi}{L\psi} \inner{\psi}{\obs\psi})}}^2.
\end{aligned}
\end{align}
\end{lemma}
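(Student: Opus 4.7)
}
The plan is to apply It\^o's formula to the real-valued scalar $Z(\psi) := \inner{\psi}{\obs\psi}$ (which is real since $\obs$ is Hermitian), then square it, and finally use the unraveling constraint \eqref{eqn::unraveling_condition} to rewrite the drift in the stated form. Throughout, I will use the ansatz \eqref{eqn::sde_1} with $b = \eta(\psi)\psi + e^{i\theta(\psi)} L\psi$ only at the final stage; the earlier steps work for any admissible $(a,b)$.

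First, I would compute $\ud Z$ using It\^o. Writing $\ud \psi = a\,\ud t + b\,\ud W$ and using that $\inner{\cdot}{\cdot}$ is conjugate-linear in the first slot with $\obs=\obs^\dagger$, one finds
\begin{align*}
\ud Z &= \big(\inner{a}{\obs\psi} + \inner{\psi}{\obs a} + \inner{b}{\obs b}\big)\ud t + \big(\inner{b}{\obs\psi} + \inner{\psi}{\obs b}\big)\ud W \\
&= \big(2\re{\inner{\psi}{\obs a}} + \inner{b}{\obs b}\big)\ud t + 2\re{\inner{\psi}{\obs b}}\ud W.
\end{align*}
Then It\^o applied to $Z^2$ (with $Z$ real) and taking expectation gives
\begin{align*}
\frac{\ud}{\ud t}\ee[Z^2] = \ee\Big[4Z\,\re{\inner{\psi}{\obs a}} + 2Z\inner{b}{\obs b} + 4\big(\re{\inner{\psi}{\obs b}}\big)^2\Big].
\end{align*}

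Next, I would eliminate the explicit dependence on $a$ using the unraveling constraint \eqref{eqn::unraveling_condition}. Sandwiching $a\psi^\dagger + \psi a^\dagger + bb^\dagger = \mathcal{L}(\psi\psi^\dagger)$ between $\obs\psi$ on both sides and using that $\inner{\obs\psi}{\psi} = Z$ yields the identity
\begin{align*}
\innerbig{\obs\psi}{\mathcal{L}(\psi\psi^\dagger)\obs\psi} = 2Z\,\re{\inner{\psi}{\obs a}} + \absbig{\inner{b}{\obs\psi}}^2,
\end{align*}
so that $4Z\re{\inner{\psi}{\obs a}} = 2\inner{\obs\psi}{\mathcal{L}(\psi\psi^\dagger)\obs\psi} - 2|\inner{b}{\obs\psi}|^2$. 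Inserting this into the expression for $\frac{\ud}{\ud t}\ee[Z^2]$ leaves me to identify
\begin{align*}
2\ee\big[Z\inner{b}{\obs b} - \absbig{\inner{b}{\obs\psi}}^2\big] + 4\ee\big[\big(\re{\inner{\psi}{\obs b}}\big)^2\big]
\end{align*}
with $2\ee\big[Z\inner{L\psi}{\obs L\psi} - |\inner{\obs\psi}{L\psi}|^2\big] + 4\loss_1$.

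The main technical step is the cancellation for this remaining piece, substituting $b = \eta\psi + e^{i\theta}L\psi$. Expanding $\inner{b}{\obs b}$ produces $|\eta|^2 Z + 2\re{\eta e^{-i\theta}\inner{L\psi}{\obs\psi}} + \inner{L\psi}{\obs L\psi}$, and expanding $|\inner{b}{\obs\psi}|^2 = |\bar\eta Z + e^{-i\theta}\inner{L\psi}{\obs\psi}|^2$ produces $|\eta|^2 Z^2 + 2Z\re{\eta e^{-i\theta}\inner{L\psi}{\obs\psi}} + |\inner{L\psi}{\obs\psi}|^2$. The $|\eta|^2 Z^2$ and cross terms in $\eta$ cancel in the difference $Z\inner{b}{\obs b} - |\inner{b}{\obs\psi}|^2$, leaving exactly $Z\inner{L\psi}{\obs L\psi} - |\inner{\obs\psi}{L\psi}|^2$. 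Similarly $\inner{\psi}{\obs b} = \eta Z + e^{i\theta}\inner{\obs\psi}{L\psi}$, whose real part squared is precisely the integrand of $\loss_1$. Collecting all terms gives the stated identity. The optimization equivalence is immediate, since on the right-hand side only $4\loss_1$ depends on the parameters $(\eta,\theta)$.

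Finally, for the norm-preserving case, I would substitute $\eta = ih - e^{i\theta}\inneravg{L}{\psi}$ from \eqref{eqn::norm_condition} into the integrand $\re{\eta Z + e^{i\theta}\inner{\obs\psi}{L\psi}}$. Since $Z$ is real and $h$ is real, $\re{ih Z} = 0$, so the $h$-dependence drops out, and the remaining expression collapses to $\re{e^{i\theta}(\inner{\obs\psi}{L\psi} - \inneravg{L}{\psi}\,Z)}$, which is the integrand of $\loss_2$. I expect the main obstacle to be strictly bookkeeping: keeping track of conjugate-linear versus linear slots, verifying that $\inner{\psi}{\obs L\psi} = \inner{\obs\psi}{L\psi}$, and checking that the cross $\eta$-terms in $Z\inner{b}{\obs b}$ and $|\inner{b}{\obs\psi}|^2$ cancel up to the identification $\re{\bar\eta e^{i\theta}\inner{\obs\psi}{L\psi}} = \re{\eta e^{-i\theta}\inner{L\psi}{\obs\psi}}$; once this cancellation is executed, everything else is routine.
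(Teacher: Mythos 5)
Your proposal is correct and follows essentially the same route as the paper: apply It\^o's formula to the second moment, use the unraveling constraint \eqref{eqn::unraveling_condition} to eliminate the drift $a$ in favor of $\mathcal{L}(\psi\psi^\dagger) - b b^\dagger$, substitute the ansatz $b = \eta\psi + e^{i\theta}L\psi$, and observe that the $\eta$-dependent cross terms cancel in $\inner{\psi}{\obs\psi}\inner{b}{\obs b} - \abs{\inner{b}{\obs\psi}}^2$ while $\bigl(\re{\inner{\psi}{\obs b}}\bigr)^2$ reproduces the integrand of $\loss_1$. Your grouping of the remainder as $2\ee\big[Z\inner{b}{\obs b}-\abs{\inner{b}{\obs\psi}}^2\big]+4\ee\big[(\re{\inner{\psi}{\obs b}})^2\big]$ makes the cancellation slightly more transparent than the paper's term-by-term expansion, but the argument is the same; all stated identities check out, including the norm-preserving reduction to $\loss_2$.
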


The proof is postponed to the later part of this section. 
As one could observe, the term $\gamma(\psi)$ does not show up in the loss function. If we further impose \eqref{eqn::norm}, since $\re{\eta(\psi)} =  - \re{e^{i\theta(\psi)} \inner{\psi}{L\psi}}$ does not involve $h(\psi)$, then the only variable for optimization is simply $\theta(\psi)$.

The optimization problem \eqref{eqn::deri_var} admits the following explicit global minimizers:
\begin{itemize}
\item If we impose norm-preservation \eqref{eqn::norm} a priori, and denote 
\begin{align*}
\inner{\obs \psi}{L\psi} - \inner{\psi}{L\psi} \inner{\psi}{\obs\psi} =  R(\psi) e^{i \phase(\psi)},
\end{align*}
where $R(\psi)$ is the norm and $\phase(\psi)$ is the phase factor,  
then the optimal $\theta^\star$ is given by 
$
\theta^\star(\psi) = - \phase(\psi) + \frac{\pi(1 + 2 m)}{2},$ where $m \in \mathbb{Z}.
$
Namely, 
\begin{align}
\label{eqn::theta_optimal}
e^{i \theta^\star(\psi)} = \pm i e^{-i \phase(\psi)}.
\end{align}
For this case, $\loss_2 = 0$ so that the global minimum is achieved.

\item For the general case, for any $\theta(\psi)$, we can pick $\eta(\psi)$ so that 
$\eta(\psi) \inner{\psi}{\obs\psi} + e^{i \theta(\psi)}  \inner{\obs \psi}{L\psi}$
is purely imaginary, so that it achieves $\loss_1 = 0$.
\end{itemize}

Then the conclusion of Theorem~\ref{thm::optimal_soln} easily follows by the linearity.\\

\noindent \textbf{Proof of Lemma~\ref{lem::equiv_opti}:}\\

For simplicity, we will suppress the time dependence $\psi = \psi(t)$ for clarity.
By Itô calculus and some direct simplifications, we have
\begin{align}
\label{eqn::qsd_derivative}
\begin{aligned}
&\ \frac{\ud}{\ud t}\ \ee\ \absbig{\innerbig{\psi(t)}{\obs\psi(t)}}^2\\
 =&\ 2 \ee \Big[\inner{\psi}{\obs a(\psi)} \inner{\psi}{\obs \psi}\Big] + c.c.\\
&\qquad + \ee \Big[\innerbig{b(\psi)}{\obs b(\psi)}\inner{\psi}{\obs \psi} + \innerbig{b(\psi)}{\obs \psi}\inner{b(\psi)}{\obs \psi} + \innerbig{b(\psi)}{\obs \psi}\inner{\psi}{\obs b(\psi)}\Big] + c.c.\\
\myeq{\eqref{eqn::unraveling_condition}}&\ 2 \ee \Big[ \innerbig{\obs \psi}{\big(\mathcal{L}(\psi\psi^\dagger) - b(\psi) b(\psi)^\dagger\big) \obs \psi}\Big] \\
&\qquad + \Big( \ee \Big[ \innerbig{b(\psi)}{\obs b(\psi)}\inner{\psi}{\obs \psi}+\innerbig{b(\psi)}{\obs \psi}\inner{b(\psi)}{\obs \psi} + \innerbig{b(\psi)}{\obs \psi}\inner{\psi}{\obs b(\psi)}\Big] + c.c. \Big)\\
=&\ 2 \ee \Big[\innerbig{\obs \psi}{\mathcal{L}(\psi\psi^\dagger) \obs \psi}\Big] \\
&\qquad + \Big( \ee \Big[\innerbig{b(\psi)}{\obs b(\psi)}\inner{\psi}{\obs \psi}+\innerbig{b(\psi)}{\obs \psi}\inner{b(\psi)}{\obs \psi}\Big] + c.c. \Big)\\
\end{aligned}
\end{align}
The first term above is independent of functions $a$ and $b$, so that this term is intrinsic due to the choice of the Lindblad equation and the observable.
The remaining terms of the loss function are
\begin{align*}
&\ \ee \Big[\inner{b(\psi)}{\obs b(\psi)}\inner{\psi}{\obs \psi} + \inner{b(\psi)}{\obs \psi}\inner{b(\psi)}{\obs \psi} \Big] + c.c. \\
 =&\ \ee \Big[\innerbig{\eta(\psi) \psi + e^{i \theta(\psi)} L \psi
}{\obs \big(\eta(\psi) \psi + e^{i \theta(\psi)} L \psi
\big)}\inner{\psi}{\obs \psi} \Big] + c.c. \\
&\qquad + \ee \Big[\inner{\eta(\psi) \psi + e^{i \theta(\psi)} L \psi
}{\obs \psi}\inner{\eta(\psi) \psi + e^{i \theta(\psi)} L \psi
}{\obs \psi}\Big] + c.c. \\
=&\ \ee\Big[ 2 \abs{\eta(\psi)}^2 \inner{\psi}{\obs\psi}^2 \Big] \\
&\qquad + \ee\Big[ 2 \eta^*(\psi)e^{i \theta(\psi)}  \inner{\psi}{\obs L\psi} \inner{\psi}{\obs\psi} + 2 \eta(\psi)e^{-i\theta(\psi)}\inner{L\psi}{\obs \psi} \inner{\psi}{\obs\psi}\Big] \\
&\qquad + 2 \ee\Big[\inner{L\psi}{\obs L\psi}\inner{\psi}{\obs\psi} \Big] 
+ \ee\Big[ (\eta(\psi)^2 + (\eta(\psi)^*)^2 ) \inner{\psi}{\obs\psi}^2 \Big] \\
&\qquad + \ee\Big[ 2 \eta(\psi)^* e^{-i \theta(\psi)} \inner{\psi}{\obs\psi} \inner{L\psi}{\obs\psi} + 2 \eta(\psi) e^{i\theta(\psi)} \inner{\psi}{\obs\psi} \inner{\psi}{\obs L\psi} \Big] \\
&\qquad + \ee \Big[ e^{-2 i \theta(\psi)} \inner{L\psi}{\obs\psi}^2 + e^{2i\theta(\psi)}\inner{\psi}{\obs L\psi}^2 \Big] \\
=&\ \ee\Big[ 4 \re{\eta(\psi)}^2 \inner{\psi}{\obs\psi}^2 \Big] \\
&\qquad + \ee\Big(4 \re{\eta(\psi)} e^{i \theta(\psi)} \inner{\psi}{\obs L\psi} \inner{\psi}{\obs\psi} + 4 \re{\eta(\psi)} e^{-i\theta(\psi)}\inner{L\psi}{\obs \psi} \inner{\psi}{\obs\psi}\Big) \\
&\qquad + 2 \ee\Big[ \inner{L\psi}{\obs L\psi}\inner{\psi}{\obs\psi} \Big]  + \ee\Big[ e^{-2 i \theta(\psi)} \inner{L\psi}{\obs\psi}^2 + e^{2i\theta(\psi)}\inner{\psi}{\obs L\psi}^2 \Big].
\end{align*}
Since the second last term also does not involve functions $a$, $b$, it is equivalent to minimizing
\begin{align*}
& \ee\Big[ 4 \re{\eta(\psi)}^2 \inner{\psi}{\obs\psi}^2 \Big] \\
&\qquad + \ee\Big[ 4 \re{\eta(\psi)} e^{i \theta(\psi)}  \inner{\psi}{\obs L\psi} \inner{\psi}{\obs\psi} + 4 \re{\eta(\psi)} e^{-i\theta(\psi)}\inner{L\psi}{\obs \psi} \inner{\psi}{\obs\psi}\Big] \\
&\qquad + \ee\Big[ e^{-2 i \theta(\psi)} \inner{L\psi}{\obs\psi}^2 + e^{2i\theta(\psi)}\inner{\psi}{\obs L\psi}^2 \Big] \\
=&\ \ee\Big[ 4 \re{\eta(\psi)}^2 \inner{\psi}{\obs\psi}^2 \Big] 
+ \ee\Big[ 8 \re{\eta(\psi)} \rebig{e^{i \theta(\psi)}  \inner{\obs \psi}{L\psi}} \inner{\psi}{\obs\psi} \Big] \\
&\qquad + \ee\Big[ 2 \re{e^{i\theta(\psi)} \inner{\obs\psi}{L\psi}}^2 - 2 \im{e^{i\theta(\psi)} \inner{\obs\psi}{L\psi}}^2 \Big] \\
=&\ 4\ \ee\Big\lvert\re{\eta(\psi)} \inner{\psi}{\obs\psi} + \re{e^{i \theta(\psi)}  \inner{\obs \psi}{L\psi}}\Big\rvert^2 \\
&\qquad - 2\ \ee\Big[ \re{e^{i\theta(\psi)} \inner{\obs\psi}{L\psi}}^2 + \im{e^{i\theta(\psi)} \inner{\obs\psi}{L\psi}}^2\Big] \\
=&\ 4\ \ee\Big\lvert \re{\eta(\psi)} \inner{\psi}{\obs\psi} + \re{e^{i \theta(\psi)}  \inner{\obs \psi}{L\psi}}\Big\rvert^2  - 2\ \ee\Big[\abs{e^{i\theta(\psi)} \inner{\obs\psi}{L\psi}}^2\Big] \\
=&\ 4\ \ee\Big\vert\re{\eta(\psi)} \inner{\psi}{\obs\psi} + \re{e^{i \theta(\psi)}  \inner{\obs \psi}{L\psi}}\Big\rvert^2  - 2\ \ee\Big[\abs{\inner{\obs\psi}{L\psi}}^2\Big].
\end{align*}
Note that the last term in the last line also does not involve functions $a$ and $b$, the loss function is simply
\begin{align*}
\loss_1 = \ee\Big\lvert\re{\eta(\psi)} \inner{\psi}{\obs\psi} + \re{e^{i \theta(\psi)}  \inner{\obs \psi}{L\psi}}\Big\rvert^2.
\end{align*}

In summary,
\begin{align*}
\frac{\ud}{\ud t} \ee \abs{\inner{\psi(t)}{\obs\psi(t)}}^2 =&\ 2 \ee \Big[\innerbig{\obs \psi}{\mathcal{L}(\psi\psi^\dagger)\obs \psi}\Big]  + 2 \ee\Big[\inner{L\psi}{\obs L\psi}\inner{\psi}{\obs\psi} \Big] - 2\ \ee\Big[\abs{\inner{\obs\psi}{L\psi}}^2\Big] + 4 \loss_1,
\end{align*}
where all previous terms before $\loss_1$ are independent of the unraveling scheme.
When we impose the norm preservation, the conclusion easily follows from \eqref{eqn::norm}.

\section{Proofs for \secref{sec::sjp}}
\subsection{Proof of Lemma~\ref{lem::sjp_unraveling}}
\label{proof::lem::sjp}

The conditions for the norm-preservation is straightforward to validate and the detailed proof is thus skipped. 
Next we will validate the unraveling constraints. Suppose the state is $\psi(t)$ at time $t$. Then locally in time $[t, t+\dt]$, the state will evolve to $\psi(t+\dt)$ according to the following jump process \eqref{eqn::jump_numerical}, and one has 
\begin{align*}
\ee\big[\psi(t+\dt) \psi(t+\dt)^\dagger\big] =&\ \Big(1 - \sum_{j} \jumprate_j\big(\psi(t)\big) \dt\Big) \Big(\psi(t) + a\big(\psi(t)\big)\dt \Big) \Big(\psi(t) + a\big(\psi(t)\big)\dt \Big)^\dagger \\
& + \sum_{j} \jumprate_j\big(\psi(t)\big) \dt \Big(\psi(t) + b_j\big(\psi(t)\big)\Big) \Big(\psi(t) + b_j\big(\psi(t)\big)\Big)^\dagger + \mathcal{O}\big(\dt^2\big) \\
=&\ \psi(t) \psi(t)^\dagger + \dt\Big(\psi(t)  a\big(\psi(t)\big)^\dagger + a\big(\psi(t)\big) \psi(t)^\dagger\Big) \\
& + \sum_{j} \dt \jumprate_j\big(\psi(t)\big) \Big(\psi(t) b_j\big(\psi(t)\big)^\dagger + b_j\big(\psi(t)\big) \psi(t)^\dagger  + b_j\big(\psi(t)\big) b_j\big(\psi(t)\big)^\dagger \Big) + \mathcal{O}\big(\dt^2\big).
\end{align*}
By matching this with the Lindblad equation, we immediately have \eqref{eqn::qjp_condition}.

\subsection{Proof of Theorem~\ref{thm::qjp_characterize}}
\label{proof::qjp_characterize}

\medskip

Assuming stochastic unraveling, we aim to establish \eqref{eqn::opti_jump}. 
Due to the similarity in the structure of the SDE and jump process cases in 
\eqref{eqn::unraveling_condition} and \eqref{eqn::qjp_condition}, we observe that for any $\psi\in \mathcal{S}^n \backslash \mathcal{E}$ where $\mathcal{E}$ is the set of eigenvectors of $L$, 
\begin{align*}
\sqrt{\jumprate(\psi)} b(\psi) &= \eta(\psi)\ \psi + e^{i \theta(\psi)} L \psi, \\
a(\psi) + \jumprate(\psi) b(\psi) &= -i {H}\psi - \frac{1}{2} L^\dagger L\psi 
+ \big(-\frac{1}{2} \abs{{\eta}(\psi)}^2 
+ i {\gamma(\psi)} \big) \psi - e^{i \theta(\psi)} \eta(\psi)^* L \psi,
\end{align*}
where $\eta$ is some complex-valued function, and $\theta, \gamma$ are real-valued functions.\\

For any $\psi\in \mathcal{S}^n\backslash \mathcal{E}$, depending on the value of $\jumprate(\psi)$, we consider two cases.\\

{\noindent \emph{Case 1:}} When $\jumprate(\psi) = 0$, we necessarily need to impose 
\begin{align*}
\eta(\psi) \psi + e^{i \theta(\psi)} L \psi = 0 \implies L \psi = -e^{-i\theta(\psi)}\eta(\psi) \psi,
\end{align*}
which means $\psi$ is necessarily an eigenvector of $L$. 
In the space $\mathcal{S}^n$, the set of such eigenvectors has measure zero. 
Therefore, this constraint is not the most important one. Besides, it contradicts with $\psi\notin \mathcal{E}$. Therefore, we don't need to consider this case.
\bigskip

{\noindent \emph{Case 2:}} When $\jumprate(\psi) > 0$, then functions $a$ and $b$ can be expressed as
\begin{align*}
b(\psi) &= \frac{\eta(\psi)}{\sqrt{\jumprate(\psi)}} \psi + \frac{e^{i\theta(\psi)}}{\sqrt{\jumprate(\psi)}} L \psi, \\
a(\psi) &=  -i {H}\psi - \frac{1}{2} L^\dagger L\psi + \big(-\frac{1}{2} \abs{{\eta(\psi)}}^2 
+ i {\gamma(\psi)} \big) \psi - e^{i \theta(\psi)} \eta^*(\psi) L \psi - \sqrt{\jumprate(\psi)} \big(\eta(\psi) \psi + e^{i \theta(\psi)} L \psi\big),
\end{align*}
are uniquely determined by $\eta$, $\theta$, $\gamma$, and $\jumprate$. To fulfill the norm-preservation constraints for $b$, we have
\begin{align*}
\Big\Vert(1 + \frac{\eta(\psi)}{\sqrt{\jumprate(\psi)}}) \psi 
+ \frac{e^{i\theta(\psi)}}{\sqrt{\jumprate(\psi)}} L \psi \Big\Vert = 1, 
\qquad \forall \psi \in \mathcal{S}^n\backslash \mathcal{E}.
\end{align*}
By directly computing the square of $\ell_2$ norm and performing direct expansion, 
we know the following by collecting terms related to $\jumprate$,
\begin{align}
\label{eqn::norm_a_b}
\begin{aligned}
& \sqrt{\jumprate(\psi)} \big(\eta(\psi) + \eta(\psi)^* + e^{i \theta(\psi)} \inneravg{L}{\psi} + e^{-i\theta(\psi)} \inneravg{L^\dagger}{\psi}) \\
& \qquad + \abs{\eta(\psi)}^2 + \eta(\psi)^* e^{i \theta(\psi)} \inneravg{L}{\psi} 
+ \eta(\psi) e^{-i \theta(\psi)} \inneravg{L^\dagger}{\psi} + \inneravg{L^\dagger L}{\psi} = 0.
\end{aligned}
\end{align}
By considering the norm-preservation constraint for $a$, we could validate that the constraint remains the same as the last equation \eqref{eqn::norm_a_b}. By expansion, 
\begin{align*}
& \sqrt{\jumprate(\psi)} \Big(
    2 \re{\eta(\psi)} + 2 \re{e^{i\theta(\psi)} \inneravg{L}{\psi}}\Big) 
    + \re{\eta(\psi)}^2 + \im{\eta(\psi)}^2 \\
    & \qquad + 2 \re{\eta(\psi)} \re{e^{i \theta(\psi)} \inneravg{L}{\psi}} 
    + 2 \im{\eta(\psi)} \im{e^{i\theta(\psi)} \inneravg{L}{\psi}} + \inneravg{L^\dagger L}{\psi} = 0.
\end{align*}
By completing the square,
\begin{align*}
& \Big(\re{\eta(\psi)} + \re{e^{i\theta(\psi)} \inneravg{L}{\psi}} + \sqrt{\jumprate(\psi)}\Big)^2 \\
&\qquad + \Big(\im{\eta(\psi)} 
+ \im{e^{i\theta(\psi)} \inneravg{L}{\psi}}\Big)^2 - \abs{e^{i\theta(\psi)} \inneravg{L}{\psi}}^2 - \jumprate(\psi) + \inneravg{L^\dagger L}{\psi} = 0.
\end{align*}
This means 
\begin{align*}
\absbig{\eta(\psi) + e^{i \theta(\psi)} \inneravg{L}{\psi} + \sqrt{\jumprate(\psi)}}^2  
= \jumprate(\psi) + \abs{\inneravg{L}{\psi}}^2 - \inneravg{L^\dagger L}{\psi}.
\end{align*}
This will necessarily impose 
\begin{align*}
& \jumprate(\psi) + \abs{\inneravg{L}{\psi}}^2 - \inneravg{L^\dagger L}{\psi} \ge 0, \\
& \eta(\psi) = e^{i \beta(\psi)} \sqrt{\jumprate(\psi) + \abs{\inneravg{L}{\psi}}^2 - \inneravg{L^\dagger L}{\psi}} - e^{i\theta(\psi)} \inneravg{L}{\psi} - \sqrt{\jumprate(\psi)}.
\end{align*}
where $\beta$ is an arbitrary real-valued function.

In summary, for any $\psi\in \mathcal{S}^n\backslash \mathcal{E}$, we have $\jumprate(\psi) >0$; besides, one has $\jumprate(\psi) + \abs{\inneravg{L}{\psi}}^2 - \inneravg{L^\dagger L}{\psi} \ge 0$
and 
\begin{align*}
\left\{
\begin{aligned}
a(\psi) &=  -i {H}\psi - \frac{1}{2} L^\dagger L\psi + \big(-\frac{1}{2} \abs{{\eta}(\psi)}^2 + i {\gamma(\psi)} \big) \psi \\
&\qquad - e^{i \theta(\psi)} \eta(\psi)^* L \psi - \sqrt{\jumprate(\psi)} \big(\eta(\psi)\ \psi + e^{i \theta(\psi)} L \psi\big), \\
b(\psi) &= \frac{\eta(\psi)}{\sqrt{\jumprate(\psi)}} \psi + \frac{e^{i\theta(\psi)}}{\sqrt{\jumprate(\psi)}} L \psi, \\
\eta(\psi) &= e^{i \beta(\psi)} \sqrt{\jumprate(\psi) + \abs{\inneravg{L}{\psi}}^2 
- \inneravg{L^\dagger L}{\psi}} - e^{i\theta(\psi)} \inneravg{L}{\psi} - \sqrt{\jumprate(\psi)},
\end{aligned}
\right.
\end{align*}
where $\theta$, $\gamma$, $\beta$ are arbitrary real-valued functions.

To simplify the statement about $\jumprate$, it is not hard to show that it is equivalent to the existence of a non-negative function $\alpha$ such that 
\begin{align*}
\jumprate(\psi) = \alpha(\psi) - \abs{\inneravg{L}{\psi}}^2 + \inneravg{L^\dagger L}{\psi}.
\end{align*}
In this way, $\jumprate$ is automatically non-negative. This completes the proof.

\subsection{Proof of Theorem~\ref{thm::optimal_soln_jump}}
\label{proof::optimal_soln_jump}

\smallskip
{\noindent \textbf{Proof of Part (i):}}\\

We will similarly consider the case $K = 1$ (namely, one Lindblad operator only) for clarity of notations, 
since the conclusion can be generalized to arbitrary $K$ simply due to additivity. 
By the jump process \eqref{eqn::sjp_1}, it is straightforward to compute that
\begin{align*}
&\ \frac{\ud}{\ud t} \ee \abs{\inner{\psi(t)}{\obs\psi(t)}}^2\\
 =&\ \ee\Big[ 2\innerbig{a(\psi)}{O\psi}\inner{\psi}{O\psi} + 2\innerbig{\psi}{O a(\psi)}\inner{\psi}{O\psi} - \jumprate(\psi) \inner{\psi}{O\psi}^2 + \jumprate(\psi)\innerbig{b(\psi)+\psi}{O(b(\psi)+\psi)}^2\ \Big]\\
\myeq{\eqref{eqn::qjp_condition}}&\  \ee\Big[2 \innerbig{O\psi}{\mathcal{L}(\psi\psi^\dagger) O \psi}\Big] \\
&\qquad + \ee\Big[ - 2 \jumprate(\psi) \innerbig{\psi}{O b(\psi)}\inner{\psi}{O\psi} - 2 \jumprate(\psi) \inner{\psi}{O\psi}\innerbig{b(\psi)}{O\psi} - 2 \jumprate(\psi) \innerbig{\psi}{O b(\psi)}\innerbig{b(\psi)}{O\psi}\Big] \\
&\qquad + \ee\Big[ - \jumprate(\psi) \inner{\psi}{O\psi}^2\Big]
+ \ee\Big[\jumprate(\psi) \innerbig{b(\psi)+\psi}{O(b(\psi)+\psi)}^2\ \Big].
\end{align*}
We again notice that $\gamma$ does not play any role in the task of variance reduction, and it is simply a redundant parameter; 
without loss of generality, we could let $\gamma = 0$.

Since the first term does not depend on tunable parameters $a$ and $b$, it is enough to consider the term 
\begin{align*}
\mathscr{T} =& - 2 \jumprate(\psi) \innerbig{\psi}{O b(\psi)}\inner{\psi}{O\psi} 
- 2 \jumprate(\psi) \inner{\psi}{O\psi}\innerbig{b(\psi)}{O\psi} 
- 2 \jumprate(\psi) \innerbig{\psi}{O b(\psi)}\innerbig{b(\psi)}{O\psi} \\
& \qquad - \jumprate(\psi) \inner{\psi}{O\psi}^2 
+ \jumprate(\psi) \innerbig{b(\psi)+\psi}{O (b(\psi) +\psi)}^2 \\
=& \jumprate(\psi) \left(
\begin{aligned}
& \inner{\psi}{O b(\psi)}^2 + \innerbig{b(\psi)}{O\psi}^2 
+ \innerbig{b(\psi)}{O b(\psi)}^2 \\
& \qquad + 2 \inner{\psi}{O \psi} \inner{b(\psi)}{O b(\psi)} 
+ 2 \inner{\psi}{O b(\psi)} \innerbig{b(\psi)}{O b(\psi)} \\
&\qquad + 2 \innerbig{b(\psi)}{O \psi} \innerbig{b(\psi)}{O b(\psi)} 
\end{aligned}\right) \\
=&  \underbrace{\jumprate(\psi) \Big(\inner{\psi}{O b(\psi)}^2 + \innerbig{b(\psi)}{\obs \psi}^2 
+ 2 \inner{\psi}{\obs\psi} \innerbig{b(\psi)}{\obs b(\psi)} \Big)}_{\mathscr{T}_1} \\
&\qquad + 
\underbrace{\jumprate(\psi) \Big(\innerbig{b(\psi)}{\obs b(\psi)}^2 
+ 2 \innerbig{\psi}{\obs b(\psi)} \innerbig{b(\psi)}{\obs b(\psi)} 
+ 2 \innerbig{b(\psi)}{\obs \psi} \innerbig{b(\psi)}{\obs b(\psi)} \Big)}_{\mathscr{T}_2}.
\end{align*}
If we consider the first grouped term $\mathscr{T}_1$ above, we notice that it also shows up in the quantum state diffusion (QSD) case; e.g., see the last line in \eqref{eqn::qsd_derivative}. We have already computed in the diffusion case that
\begin{align*}
\mathscr{T}_1 = &\ 2 \inner{L\psi}{\obs L\psi}\inner{\psi}{\obs\psi} 
- 2 \abs{\inner{\obs\psi}{L\psi}}^2 
+ 4 \Big(\re{\eta(\psi)} \inner{\psi}{\obs\psi} 
+ \re{e^{i \theta(\psi)} \inner{\obs \psi}{L\psi}} \Big)^2.
\end{align*}
Therefore, we can simplify the time derivative via
\begin{align}
\label{eqn::deri_sec::1}
\begin{aligned}
 \frac{\ud}{\ud t} \ee\ \abs{\inner{\psi(t)}{\obs\psi(t)}}^2 =&\ 2\ee\Big[\inner{O\psi}{\mathcal{L}(\psi\psi^\dagger) O \psi}\Big] + 2 \ee\Big[\inner{L\psi}{\obs L\psi}\inner{\psi}{\obs\psi}\Big] - 2 \ee\Big[\abs{\inner{\obs\psi}{L\psi}}^2\Big] \\
 & \qquad + 4 \ee\Big[\Big(\re{\eta(\psi)} \inner{\psi}{\obs\psi} + \re{e^{i \theta(\psi)} \inner{\obs \psi}{L\psi}} \Big)^2\Big] \\
 &\qquad + \ee\big[\mathscr{T}_2\big].
 \end{aligned}
\end{align}
Compared to Lemma~\ref{lem::equiv_opti}, 
the difference here is that the jump process contains additional third and fourth-order terms with respect to $b$, namely terms inside $\mathscr{T}_2$. Due to the multiple appearances of the following term, denote 
\begin{align}
\label{eqn::Z}
Z(\psi) = e^{i \theta(\psi)} \inner{\obs \psi}{L\psi}.
\end{align}
Recall the expression of $b$ from \eqref{eqn::opti_jump}, and by direct calculation, one could readily show that 
\begin{align}
\label{eqn::bOb}
\begin{aligned}
\innerbig{b(\psi)}{\obs b(\psi)} &= \frac{1}{\jumprate(\psi)} \Big(\abs{\eta(\psi)}^2 \inneravg{\obs}{\psi} 
+ 2 \re{\eta(\psi)} \re{Z(\psi)} + 2 \im{\eta(\psi)} \im{Z(\psi)} + \inneravg{L^\dagger O L}{\psi} \Big), \\
\innerbig{\psi}{\obs b(\psi)} &= \frac{1}{\sqrt{\jumprate(\psi)}} \big(\eta(\psi) \inneravg{O}{\psi} + Z(\psi)\big),
\end{aligned}
\end{align}
and hence, by combing \eqref{eqn::deri_sec::1}, \eqref{eqn::bOb}, and the definition of $\mathscr{T}_2$ above,
\begin{align*}
&\  \frac{\ud}{\ud t} \ee\ \abs{\inner{\psi(t)}{\obs\psi(t)}}^2 \\
=&\ 2\ee\Big[\inner{O\psi}{\mathcal{L}(\psi\psi^\dagger) O \psi}\Big]
+ 2 \ee\Big[\inner{L\psi}{\obs L\psi}\inner{\psi}{\obs\psi}\Big]
- 2 \ee\Big[\abs{\inner{\obs\psi}{L\psi}}^2\Big] \\
&\qquad +  4 \ee\Big[\Big(\re{\eta(\psi)} \inneravg{\obs}{\psi} + \re{Z(\psi)}\Big)^2\Big]\\
&\qquad + \ee\bigg[\jumprate(\psi) \bigg(
    \innerbig{b(\psi)}{\obs b(\psi)}^2 
    + \frac{4}{\sqrt{\jumprate(\psi)}} 
    \Big(\re{\eta(\psi)}\inneravg{\obs}{\psi} 
    + \re{Z(\psi)}\Big) 
    \innerbig{b(\psi)}{\obs b(\psi)}\bigg)\bigg] \\
=&\ 2\ee\Big[\inner{O\psi}{\mathcal{L}(\psi\psi^\dagger) O \psi}\Big] 
+ 2 \ee\Big[\inner{L\psi}{\obs L\psi}\inner{\psi}{\obs\psi}\Big] 
- 2 \ee\Big[\abs{\inner{\obs\psi}{L\psi}}^2\Big] \\
&\qquad +  \ee\bigg[\jumprate(\psi)\Big(\innerbig{b(\psi)}{\obs b(\psi)} 
+ \frac{2}{\sqrt{\jumprate(\psi)}} \big(\re{\eta(\psi)} \inneravg{\obs}{\psi} 
+ \re{Z(\psi)}\big) \Big)^2 \bigg].
\end{align*}
Since the last term is non-negative, by combining the above with \eqref{eqn::p_optimal},
\begin{align*}
\mathsf{dV}_{\text{diffusion}} \le \mathsf{dV}_{\text{jump}}.
\end{align*}
As mentioned above, the above inequality for the case $K >1$ can be similarly proved via the additivity of the Lindblad operators. Even though locally the jump process won't be superior to the diffusion process due to the last inequality,
we will still show that it is possible to optimize the tunable functions and find an explicit formula for the optimal functions.\\

{\noindent \textbf{Proof of Part (ii): Fix $\lambda$ (or say $\alpha$) and then optimize $\theta, \beta$.}}\\

We break the proof into two steps. First, we consider the optimization of $\theta$ and $\beta$ for fixed $\jumprate$. 
Then we consider the optimization of $\jumprate$ (namely, optimizing $\alpha$) later. 
Since $\jumprate \ge 0$ and all tunable parameters are state-dependent functions, 
we can simply optimize the rate of change of variance in a pointwise way.

It is thus enough to consider 
\begin{align*}
&\ \min\ \jumprate(\psi) \Big(\underbrace{\inner{b(\psi)}{O b(\psi)} 
+ \frac{2}{\sqrt{\jumprate(\psi)}} \big(\re{\eta(\psi)}\inneravg{\obs}{\psi} + \re{Z(\psi)}\big)}_{=:\mathscr{T}_3(\psi)}\Big)^2.
\end{align*}
The term inside is
\begin{align*}
\mathscr{T}_3(\psi) =\ &\ \innerbig{b(\psi)}{\obs b(\psi)} 
+ \frac{2}{\sqrt{\jumprate(\psi)}} \Big(\re{\eta(\psi)}\inneravg{\obs}{\psi} + \re{Z(\psi)}\Big) \\
\myeq{\eqref{eqn::bOb}}&\  \Big(\frac{\re{\eta(\psi)}}{\sqrt{\jumprate(\psi)}}\Big)^2 \inneravg{\obs}{\psi} 
+ \Big(\frac{\im{\eta(\psi)}}{\sqrt{\jumprate(\psi)}}\Big)^2 \inneravg{\obs}{\psi} 
 + 2 \frac{\re{\eta(\psi)}}{\sqrt{\jumprate(\psi)}} \frac{\re{Z(\psi)}}{\sqrt{\jumprate(\psi)}} 
 + 2\frac{\im{\eta(\psi)}}{\sqrt{\jumprate(\psi)}} \frac{\im{Z(\psi)}}{\sqrt{\jumprate(\psi)}} \\
&\qquad + \frac{\inneravg{L^\dagger O L}{\psi}}{\jumprate(\psi)} 
+ 2 \frac{\re{\eta(\psi)}}{\sqrt{\jumprate(\psi)}} \inneravg{\obs}{\psi} 
+ \frac{2\re{Z(\psi)}}{\sqrt{\jumprate(\psi)}} \\
=\ &\ \Big(\frac{\re{\eta(\psi)}}{\sqrt{\jumprate(\psi)}}+1\Big)^2 \inneravg{\obs}{\psi} 
+ \Big(\frac{\im{\eta(\psi)}}{\sqrt{\jumprate(\psi)}}\Big)^2 \inneravg{\obs}{\psi} 
+ 2 \Big(\frac{\re{\eta(\psi)}}{\sqrt{\jumprate(\psi)}} +1\Big) \frac{\re{Z(\psi)}}{\sqrt{\jumprate(\psi)}} \\
&\qquad + 2\frac{\im{\eta(\psi)}}{\sqrt{\jumprate(\psi)}} \frac{\im{Z(\psi)}}{\sqrt{\jumprate(\psi)}} 
+ \frac{\inneravg{L^\dagger O L}{\psi}}{\jumprate(\psi)} - \inneravg{\obs}{\psi} \\
=\ &\ \Big\lvert \frac{{\eta(\psi)}}{\sqrt{\jumprate(\psi)}}+1 \Big\rvert^2 \inneravg{\obs}{\psi} 
+ 2 \reBig{\big(\frac{\eta(\psi)}{\sqrt{\jumprate(\psi)}} +1\big)^* \frac{{Z(\psi)}}{\sqrt{\jumprate(\psi)}}} 
+ \frac{\inneravg{L^\dagger O L}{\psi}}{\jumprate(\psi)} - \inneravg{\obs}{\psi}.
\end{align*}
By plugging the expression of $\eta$ \eqref{eqn::opti_jump}, we have 
\begin{align*}
&\ \frac{\eta(\psi)}{\sqrt{\jumprate(\psi)}} + 1 \myeq{\eqref{eqn::opti_jump}}\ 
e^{i \beta(\psi)}\underbrace{\sqrt{1 + \frac{\abs{\inneravg{L}{\psi}}^2 
- \inneravg{L^\dagger L}{\psi}}{\jumprate(\psi)}}}_{=: \cstA(\psi)}
- \frac{e^{i\theta(\psi)}}{\sqrt{\jumprate(\psi)}} \inneravg{L}{\psi}.
\end{align*}
Hence, by plugging the last equation and \eqref{eqn::Z}, 
\begin{align*}
\mathscr{T}_3(\psi) =&\ \big\lvert e^{i\beta(\psi)} \cstA(\psi) 
- \frac{e^{i\theta(\psi)}}{\sqrt{\jumprate(\psi)}} \inneravg{L}{\psi} \big\rvert^2 \inneravg{\obs}{\psi} 
+ 2 \re{(e^{-i\beta(\psi)} \cstA(\psi) - \frac{e^{-i\theta(\psi)}}{\sqrt{\jumprate(\psi)}} 
\inneravg{L}{\psi}^*) \frac{e^{i \theta(\psi)} 
\inner{\obs \psi}{L\psi}}{\sqrt{\jumprate(\psi)}}}  \\
&\qquad\qquad + \frac{\inneravg{L^\dagger O L}{\psi}}{\jumprate(\psi)}
- \inneravg{\obs}{\psi} \\
=&\ \big\lvert \cstA(\psi) - \frac{e^{i(\theta(\psi)-\beta(\psi))}}{\sqrt{\jumprate(\psi)}} \inneravg{L}{\psi} \big\rvert^2 \inneravg{\obs}{\psi} 
+ 2 \re{\cstA(\psi) \frac{e^{i (\theta(\psi)-\beta(\psi))} \inner{\obs \psi}{L\psi}}{\sqrt{\jumprate(\psi)}}} \\
&\qquad - 2\re{ \frac{\inneravg{L}{\psi}^*}{\sqrt{\jumprate(\psi)}}\frac{\inner{\obs \psi}{L\psi}}{\sqrt{\jumprate(\psi)}} } 
+ \frac{\inneravg{L^\dagger O L}{\psi}}{\jumprate(\psi)} - \inneravg{\obs}{\psi}.
\end{align*}
Therefore, it is clear that $\mathscr{T}_3$ only depends on the phase difference of $\theta - \beta$. 
Without loss of generality, let us pick $\beta = 0$.
With some simplifications,
\begin{align}
    \label{eqn::T3}
    \begin{aligned}
    \mathscr{T}_3(\psi) =&\ \big\lvert \cstA(\psi) - e^{i \theta(\psi)} \frac{\inneravg{L}{\psi} }{\sqrt{\jumprate(\psi)}} 
\big\rvert^2 \inneravg{\obs}{\psi} + 2 \cstA(\psi)\ \re{\frac{e^{i \theta(\psi)} 
\inner{\obs \psi}{L\psi}}{\sqrt{\jumprate(\psi)}}} \\
&\qquad + \frac{\inneravg{L^\dagger O L}{\psi} - 2 \re{{\inneravg{L}{\psi}^*} \inner{\obs \psi}{L\psi}}}{\jumprate(\psi)}  - \inneravg{O}{\psi},
    \end{aligned}
\end{align}
and recall that we want to minimize 
\begin{align}
\label{eqn::min_Lambda_T3}
\ \min_{\alpha} \min_{\theta}\ \jumprate(\psi) \big(\mathscr{T}_3(\psi)\big)^2.
\end{align}

\begin{lemma}
\label{lemma::ABC}
Suppose $\CA, \CB, \CC$ are real-valued constants, and we want to minimize 
\begin{align}
    \label{eqn::opti_ABC}
\min_{\theta\in [0, 2\pi]}\ \ \abs{\CA \cos(\theta) + \CB \sin(\theta) + \CC}^2.
\end{align}
The optimal $\theta^\star = \Theta(\CA, \CB, \CC)$ where
\begin{align}
    \label{eqn::theta_optimal-proof}
e^{i \Theta(\CA, \CB, \CC)} = 
\left\{\begin{aligned}
& \frac{- \CC \pm \sqrt{\CA^2 + \CB^2 - \CC^2} i}{\CA - i \CB}, \qquad & \text{ if } \sqrt{\CA^2 + \CB^2} \ge \abs{\CC} \text{ and } \CA^2 + \CB^2 > 0; \\
& \frac{{\normalfont \text{sign}}(-\CC) \sqrt{\CA^2 + \CB^2}}{\CA - i \CB}, \qquad & \text{ if } \sqrt{\CA^2 + \CB^2} < \abs{\CC} \text{ and } \CA^2 + \CB^2 > 0; \\
& \text{arbitrary value with norm one}, \qquad & \text{ if } \CA = \CB = 0.
\end{aligned}\right.
\end{align}
The minimum value is
\begin{align}
    \label{eqn::min_ABC}
\Big(\max\{0, \abs{\CC} - \sqrt{\CA^2 + \CB^2}\}\Big)^2.
\end{align}
\end{lemma}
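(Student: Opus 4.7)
The plan is to convert the trigonometric sum into amplitude--phase form and reduce the minimization to a one-dimensional question about the range of a shifted cosine. When $\CA^2+\CB^2>0$, set $R=\sqrt{\CA^2+\CB^2}$ and pick $\varphi$ so that $e^{i\varphi}=(\CA+i\CB)/R$. Then
\begin{equation*}
\CA\cos\theta+\CB\sin\theta=R\cos(\theta-\varphi),
\end{equation*}
and the objective in \eqref{eqn::opti_ABC} becomes $\bigl(R\cos(\theta-\varphi)+\CC\bigr)^{2}$. Because $R\cos(\theta-\varphi)+\CC$ sweeps the interval $[\CC-R,\CC+R]$ as $\theta$ ranges over $[0,2\pi]$, the infimum of the squared modulus depends on whether this interval contains the origin, which splits the analysis into the three cases in the statement.

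First I would treat the \emph{absorbing} regime $R\geq|\CC|$ with $R>0$: the interval contains zero, so the minimum equals $0$ and is achieved precisely when $\cos(\theta-\varphi)=-\CC/R$, giving $e^{i(\theta-\varphi)}=(-\CC\pm i\sqrt{R^{2}-\CC^{2}})/R$. Multiplying by $e^{i\varphi}$ and using the identity $e^{i\varphi}/R=(\CA+i\CB)/R^{2}=1/(\CA-i\CB)$ produces the first branch of \eqref{eqn::theta_optimal-proof}. For the \emph{non-absorbing} regime $R<|\CC|$ with $R>0$, the interval lies strictly to one side of the origin, and $|R\cos(\theta-\varphi)+\CC|$ is minimized (with value $|\CC|-R$) by pushing $R\cos(\theta-\varphi)$ as far toward $-\text{sign}(\CC)R$ as possible; this forces $e^{i(\theta-\varphi)}=\text{sign}(-\CC)$, and after multiplying by $e^{i\varphi}$ one recovers the second branch. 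Finally, in the degenerate case $\CA=\CB=0$ the objective is the constant $\CC^{2}$ and any unit-modulus $\theta$ is optimal.

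The uniform formula $\bigl(\max\{0,|\CC|-\sqrt{\CA^{2}+\CB^{2}}\}\bigr)^{2}$ for the minimum value then follows by inspecting the three cases. As a sanity check, both closed-form expressions for $e^{i\theta^{\star}}$ should be verified to have unit modulus: in the first branch $|-\CC\pm i\sqrt{R^{2}-\CC^{2}}|=R=|\CA-i\CB|$, and in the second $\sqrt{\CA^{2}+\CB^{2}}=|\CA-i\CB|$, so both ratios lie on the unit circle. I do not anticipate a serious obstacle; the only minor bookkeeping is keeping track of the sign of $\CC$ and the $\pm$ branch ambiguity, which simply encodes the fact that the equation $\cos(\theta-\varphi)=-\CC/R$ has two solutions on the unit circle whenever its right-hand side lies strictly inside $(-1,1)$.
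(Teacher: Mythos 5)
Your proof is correct and follows essentially the same route as the paper, which simply rewrites the objective as $\absbig{\re{(\CA - i\CB)e^{i\theta}} + \CC}^2$ (identical to your amplitude--phase form $R\cos(\theta-\varphi)+\CC$) and declares the rest elementary. You have merely carried out the elementary case analysis in full, and the verification of both branches of \eqref{eqn::theta_optimal-proof} and the minimum value \eqref{eqn::min_ABC} checks out.
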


\begin{proof}
    The expression above is $\abs{\re{(\CA-i \CB) e^{i\theta}} + \CC}$. The remaining proof is elementary, and is skipped.
\end{proof}

For the above inner optimization problem in \eqref{eqn::min_Lambda_T3}, namely, $\min_\theta \big(\mathcal{T}_3(\psi)\big)^2$, 
we can organize the expression of \eqref{eqn::T3} in terms of the above optimization problem \eqref{eqn::opti_ABC}
where
coefficients are 
\begin{align}
\label{eqn::ABC}
\left\{
\begin{aligned}
\CA(\psi) = &\  
    \frac{2 \cstA(\psi)}{\sqrt{\jumprate(\psi)}} \Big(- {\re{\inneravg{L}{\psi}}} \inneravg{O}{\psi} + \re{\inneravg{OL}{\psi}} \Big), \\
\CB(\psi) = &\ \frac{2 G(\psi)}{\sqrt{\lambda(\psi)}} \im{\inneravg{L}{\psi}} \inneravg{O}{\psi} - 2 \frac{\cstA(\psi)}{\sqrt{\jumprate(\psi)}} \im{\inneravg{\obs L}{\psi}}, \\
\CC(\psi) = &\ \big(\cstA(\psi)^2-1\big) \inneravg{\obs}{\psi} + \frac{\abs{\inneravg{L}{\psi}}^2 \inneravg{\obs}{\psi} 
    + \inneravg{L^\dagger \obs L}{\psi} 
    - 2 \re{{\inneravg{L}{\psi}^*} \inneravg{OL}{\psi}} }{\jumprate(\psi)} \\
=&\ \frac{\abs{\inneravg{L}{\psi}}^2 \inneravg{O}{\psi}  - \inneravg{L^\dagger L}{\psi} \inneravg{O}{\psi} 
    + \abs{\inneravg{L}{\psi}}^2 \inneravg{O}{\psi} 
    + \inneravg{L^\dagger O L}{\psi} 
    - 2 \re{{\inneravg{L}{\psi}^*} \inneravg{\obs L}{\psi}} }{\jumprate(\psi)},\\
\end{aligned}\right.
\end{align}
where
\begin{align}
\label{eqn::cstA}
\cstA(\psi) = & \sqrt{1 + \frac{\abs{\inneravg{L}{\psi}}^2 - \inneravg{L^\dagger L}{\psi}}{\jumprate(\psi)}}.
\end{align}

In summary, given any fixed $\lambda$ or say $\alpha$, we know that $\theta^*(\psi) = \Theta(\CA(\psi), \CB(\psi), \CC(\psi))$ 
where $\CA(\psi)$, $\CB(\psi)$ and $\CC(\psi)$ are given in \eqref{eqn::ABC}.

\bigskip

{\noindent \textbf{Proof of Part (ii): Optimize $\lambda$ or say $\alpha$.}}\\

As seen above in \eqref{eqn::min_ABC}, the optimization problem for $\lambda$ is 
\begin{align}
\label{eqn::min_alpha}
\begin{aligned}
&\ \min_{\alpha} \lambda(\psi) \Big(\max\Big\{0, \abs{\CC(\psi)} - \sqrt{\CA(\psi)^2 + \CB(\psi)^2}\Big\}\Big)^2 \\
=&\ \min_{\alpha} \Big(\max\Big\{0, \sqrt{\lambda(\psi)} \abs{\CC(\psi)} - \sqrt{\lambda(\psi)} \sqrt{\CA(\psi)^2 + \CB(\psi)^2}\Big\}\Big)^2.
\end{aligned}
\end{align}
Hence, we should minimize the following quantity as much as possible:
\begin{align}
\label{eqn::optimize_alpha}
\min_{\alpha}\ \sqrt{\lambda(\psi)} \abs{\CC(\psi)} - \sqrt{\lambda(\psi)} \sqrt{\CA(\psi)^2 + \CB(\psi)^2}.
\end{align}
When this value is non-positive, then the optimal value in \eqref{eqn::min_Lambda_T3} is zero and it reaches the global minimum that diffusion process could also reach.
When this value is strictly positive, then this shows that the jump process is strictly worsen than the diffusion process in terms of variance reduction.
We remark that the global minimizer for the above may not be unique.
However, the one that minimizes the above value must be a global minimizer. Therefore,
it suffices to consider this optimization problem \eqref{eqn::optimize_alpha} to find an optimal $\alpha$ (or equivalently the jump rate $\jumprate$).

By direct simplification, we have
\begin{align*}
\CA(\psi)^2 + \CB(\psi)^2 &= \frac{\cstA(\psi)^2}{\jumprate(\psi)} \CF(\psi), \qquad \abs{\CC} = \frac{\abs{\CC_1}}{\jumprate},
\end{align*}
where the following two quantities are independent of any tunable parameter:
\begin{align*}
\CF(\psi) &:= 4 \big(-{\re{\inneravg{L}{\psi}}} \inneravg{\obs}{\psi} 
+ \re{\inneravg{OL}{\psi}}\big)^2 
+ 4 {(-\im{\inneravg{OL}{\psi}} + \im{\inneravg{L}{\psi}} \inneravg{O}{\psi})^2}, \\
\CC_1(\psi) &:= \abs{\inneravg{L}{\psi}}^2 \inneravg{O}{\psi} - \inneravg{L^\dagger L}{\psi} \inneravg{O}{\psi} 
+  \abs{\inneravg{L}{\psi}}^2 \inneravg{O}{\psi} 
+ \inneravg{L^\dagger O L}{\psi} 
- 2 \re{{\inneravg{L}{\psi}^*}\inneravg{OL}{\psi}}.
\end{align*}
Hence, the above optimization problem is 
\begin{align*}
&\ \min_\alpha \sqrt{\lambda(\psi)} \abs{\CC(\psi)} - \sqrt{\lambda(\psi)} \sqrt{\CA(\psi)^2 + \CB(\psi)^2} 
=\ \min_\alpha \sqrt{\lambda(\psi)} \frac{\abs{\CC_1(\psi)}}{\lambda(\psi)} - \sqrt{\lambda(\psi)} \frac{G(\psi)}{\sqrt{\lambda(\psi)}} \sqrt{\CF(\psi)} \\
=&\ \min_\alpha \frac{\abs{\CC_1(\psi)}}{\sqrt{\lambda(\psi)}} - G(\psi) \sqrt{\CF(\psi)}
\myeq{\eqref{eqn::cstA}}\ \min_{\alpha} \frac{\abs{\CC_1(\psi)} - \sqrt{\CF(\psi) \alpha}}{\sqrt{\alpha + \CL(\psi)}},
\end{align*}
where the jump rate $\jumprate(\psi) = \alpha(\psi) + {\CL}(\psi)$, and ${\CL}(\psi):= \inneravg{L^\dagger L}{\psi} - \abs{\inneravg{L}{\psi}}^2\ge 0$ is also independent of tunable parameter $\alpha$.
We can show that the above function is monotone decreasing on $\alpha\in [0,\infty)$, so that the global minimum with boundary constraint $\jumprate \le \Lambda$ in \eqref{eqn::deri_var_jump} is achieved when 
\begin{align}
\label{eqn::optimal_alpha}
\jumprate^\star = \Lambda, \qquad \longleftrightarrow \qquad \alpha^\star = \Lambda - (\inneravg{L^\dagger L}{\psi} - \abs{\inneravg{L}{\psi}}^2).
\end{align}

If we optimize the above over the whole half-line $\alpha\in [0,\infty)$,
 and let $\tilde{\alpha}$ be
\begin{align}
\begin{aligned}
\frac{\abs{\CC_1(\psi)} - \sqrt{\CF(\psi) \tilde{\alpha}}}{\sqrt{\tilde{\alpha} + \CL(\psi)}} = 0 \qquad \implies \qquad \tilde{\alpha} &= {\frac{\abs{\CC_1(\psi)}^2}{\CF(\psi)}},
\end{aligned}
\end{align}
we can readily know that for any $\alpha \ge \tilde{\alpha}$, the optimization problem \eqref{eqn::min_alpha} achieves value zero. 
However, since the time step $\Delta t$ cannot be chosen to be infinitesimally small in practice, the optimization problem for the uniformly bounded $\jumprate \le \Lambda$ clearly may not yield a minimum value as low as DO-QSD.

\section{DO-QSD for multiple observables}
\label{appx::multiple_observable}

In the above Theorem~\ref{thm::optimal_soln}, it so far only considered the case of a single observable.
However, it is possible to extend the above result to the case of multiple observables.
For example, if we have two observables $\obs_1$ and $\obs_2$, we can simply treat the observables separately 
so that we run a customized DO-QSD for each observable.
When there is a certain correlation between the two observables, e.g., 
the optimal $\theta$ for $\obs_1$ and $\obs_2$ are considerably close,
then it is likely that we can design a scheme that optimizes both observables simultaneously.

\begin{proposition}
Let $O_1,O_2\cdots O_m$ be a collection of different observables, 
and consider the total variance 
\begin{align}
\var\big(a, \{b_k\}_{k=1}^{K}, t\big) = \sum_{j=1}^m \ee \abs{\inner{\psi(t)}{\obs_j\psi(t)}}^2 - \abs{\tr(\obs_j \rho(t))}^2.
\end{align}
The optimization problem is 
\begin{align*}
\min_{a, \{b_k\}} \frac{\ud}{\ud t} \var\big(a, \{b_k\}_{k=1}^{K}, t\big),
\end{align*}
with $a$, $\{b_k\}_{k=1}^K$ in the form of \eqref{eqn::ab_Cn::2} and \eqref{eqn::norm},
admits the following explicit global minimizers $(\eta_k^\star, \theta_k^\star)$,
where their expressions are given by 
\begin{align}
\label{eqn::muti_global_min}
{\normalfont \text{(multi-DO-QSD)}}\qquad\qquad 
\left\{\ 
\begin{aligned}
\eta_k^\star(\psi) =&\ - e^{i \theta_k^\star(\psi)} \inner{\psi}{L_k\psi},\\
e^{i \theta^\star_k(\psi)} =&\ \pm i\ e^{-i \phase_k(\psi)},
\end{aligned}\right.
\end{align}
where for each $1\le k\le K$, $2 \phase_k(\psi)$ is the phase of
\begin{align*}
\sum_{j=1}^{m} R_{k,j}^2(\psi) e^{2 i P_{k,j}(\psi)},
\end{align*}
where for each $1\le j \le m$, we define $R_{k,j}$ and $P_{k,j}$ as 
$\inner{\obs_j \psi}{L_k\psi} - \inner{\psi}{L_k\psi} \inner{\psi}{\obs_j\psi}$ = $R_{k,j}(\psi)e^{iP_{k,j}(\psi)}$.
\end{proposition}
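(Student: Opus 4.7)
The proof will follow the exact same architecture as the single-observable case (Lemma~\ref{lem::equiv_opti} and Theorem~\ref{thm::optimal_soln}), exploiting that the variance summed over multiple observables decomposes additively over $j$ and that the optimization over $\theta_k$ decouples across $k$.

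First I would observe that, by linearity of expectation,
\begin{align*}
\frac{\ud}{\ud t} \var\big(a, \{b_k\}, t\big) = \sum_{j=1}^m \frac{\ud}{\ud t} \ee \abs{\inner{\psi}{\obs_j \psi}}^2,
\end{align*}
since $\abs{\tr(\obs_j \rho(t))}^2$ is fixed by the Lindblad equation and independent of the unraveling. Applying Lemma~\ref{lem::equiv_opti} separately to each term (and extending its proof from one Lindblad operator to $K$ of them, which follows from additivity of the drift-diffusion bilinear forms across $k$), the only tunable contribution reduces, under the norm-preservation condition \eqref{eqn::norm} with $h_k = 0$, to
\begin{align*}
\loss = \sum_{k=1}^{K} \sum_{j=1}^{m} \ee\ \absBig{\reBig{e^{i \theta_k(\psi)}\big(\inner{\obs_j \psi}{L_k \psi} - \inner{\psi}{L_k \psi}\inner{\psi}{\obs_j \psi}\big)}}^2.
\end{align*}
Crucially, $\theta_k$ appears only in the $k$-th summand, so the optimization decouples across $k$, and also into a pointwise optimization over $\psi$.

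Next I would substitute $\inner{\obs_j \psi}{L_k \psi} - \inner{\psi}{L_k \psi}\inner{\psi}{\obs_j \psi} = R_{k,j}(\psi) e^{i P_{k,j}(\psi)}$, so that each inner summand becomes $R_{k,j}^2 \cos^2(\theta_k + P_{k,j})$. Using the half-angle identity $\cos^2(x) = \tfrac{1}{2}(1 + \cos(2x))$, the per-$k$ pointwise loss becomes
\begin{align*}
\sum_{j=1}^{m} R_{k,j}^2 \cos^2(\theta_k + P_{k,j}) = \tfrac{1}{2}\sum_{j=1}^{m} R_{k,j}^2 + \tfrac{1}{2}\ \reBig{e^{2 i \theta_k(\psi)} \sum_{j=1}^{m} R_{k,j}^2 e^{2 i P_{k,j}}}.
\end{align*}
The first sum is independent of $\theta_k$. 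Writing the complex number $\sum_j R_{k,j}^2 e^{2 i P_{k,j}}$ in polar form with phase $\phase_k(\psi)$, the real part in the second term is minimized to $-\big|\sum_j R_{k,j}^2 e^{2 i P_{k,j}}\big|$ precisely when $2\theta_k(\psi) + \phase_k(\psi) \equiv \pi \pmod{2\pi}$, equivalently $e^{i \theta_k^\star(\psi)} = \pm i\, e^{-i \phase_k(\psi)/2}$, which is the content of \eqref{eqn::muti_global_min}. Finally, $\eta_k^\star$ is forced by the norm-preservation condition \eqref{eqn::norm} with $h_k = 0$, giving the stated formula.

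The main obstacle, if any, is a bookkeeping one: carefully tracking that the $K$-operator extension of Lemma~\ref{lem::equiv_opti} really produces a loss additively decoupled across $k$ (the cross terms between $L_k$ and $L_{k'}$ in the diffusion-diffusion product vanish because the Brownian motions are independent, which appears naturally in the It\^o computation \eqref{eqn::qsd_derivative}), and verifying that the choice $h_k = 0$ remains optimal in the multi-observable setting. The latter follows because $h_k$ enters only through $\re{\eta_k} = -\re{e^{i\theta_k}\inneravg{L_k}{\psi}}$ (unaffected by $h_k$), so $h_k$ does not appear in $\loss$ at all, and thus can be set to zero without loss of optimality. Everything else is a direct repetition of the single-observable derivation applied term-by-term.
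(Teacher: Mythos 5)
Your proof is correct and follows essentially the same route as the paper: reduce to the tunable part of the loss via Lemma~\ref{lem::equiv_opti}, optimize pointwise and separately in each $k$, and use the half-angle identity to rewrite $\sum_j R_{k,j}^2\cos^2(\theta_k+P_{k,j})$ as a constant plus $\tfrac12\re{e^{2i\theta_k}\sum_j R_{k,j}^2 e^{2iP_{k,j}}}$, which is minimized by aligning the phase. One point worth flagging: your (correct) conclusion is $e^{i\theta_k^\star}=\pm i\,e^{-i\phase_k(\psi)/2}$, whereas the proposition as printed omits the factor $1/2$ in the exponent; since $\phase_k$ is defined as the phase of $\sum_j R_{k,j}^2 e^{2iP_{k,j}}$, the printed formula would not reduce to Theorem~\ref{thm::optimal_soln} when $m=1$, so your derivation exposes a typo in the statement (and in the final line of the paper's own proof, which writes $e^{2i\theta}=-e^{-2i\phase_k}$ where $e^{2i\theta}=-e^{-i\phase_k}$ is what the computation actually yields) rather than a gap in your argument.
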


\begin{proof}
Due to the linearity, we only need to consider the form \eqref{eqn::sde_1}. 
Regarding Lemma~\ref{lem::equiv_opti}, 
it is clear that we need to minimize the term $\loss$, where
\begin{align*}
\loss = \sum_{k=1}^{K} \sum_{j=1}^m 4 \ee\ \absBig{\rebig{e^{i \theta_k(\psi)} \big(\inner{\obs_j \psi}{L_k\psi} - \inner{\psi}{L_k\psi} \inner{\psi}{\obs_j\psi})}}^2.
\end{align*}
Note that this optimization can be performed in the pointwise sense.
Then it amounts to consider the following optimization problem with fixed $k$:
\begin{align*}
    \min_{\theta_k} &\ \sum_{j=1}^m\ \absBig{\rebig{e^{i \theta_k} \big(\inner{\obs_j \psi}{L_k\psi} - \inner{\psi}{L_k\psi} \inner{\psi}{\obs_j\psi})}}^2 \\
 =\min_{\theta_k} & \  \sum_{j=1}^m R_{k,j}^2(\psi)\cos^2(\theta_k + P_{k,j}(\psi)) \\
= \min_{\theta_k} &\  \sum_{j = 1}^m\ \frac{R_{k,j}^2(\psi)}{2} + \sum_{j = 1}^m\frac{R_{k,j}^2(\psi)}{2} \cos\big(2\theta_k + 2P_{k,j}(\psi)\big) \\
= \min_{\theta_k} &\  \sum_{j = 1}^m\ \frac{R_{k,j}^2(\psi)}{2} + \frac{1}{2} \rebig{e^{2 i \theta_k} \sum_{j=1}^{m}\ R_{k,j}^2(\psi) e^{2 i P_{k,j}(\psi)} }\\
=& \sum_{j = 1}^m\ \frac{R_{k,j}^2(\psi)}{2} - \frac{1}{2} \absBig{\sum_{j=1}^m R_{k,j}^2(\psi) e^{2 i P_{k,j}(\psi)}}.
\end{align*}
Note that the first term is independent of $\theta_k$. The minimum is clearly attained when $e^{2 i \theta_k(\psi)} = - e^{-2 i P_k(\psi)}$.
\end{proof}

\section{Supplementary details for \secref{sec::numerics}}
\label{sec::ml}

{\noindent \emph{Parameterization:}} For low-dimensional Lindblad equations, we can parameterize the tunable functions in \eqref{eqn::ab_Cn::2} and use the empirical variance as the loss function to obtain an optimized unraveling scheme. In our experiments for \secref{subsec::eg1}, we represent the real-valued functions $\theta_k$ and $h_k$ using fully connected neural networks. Each $\theta_k$ (and similarly $h_k$) takes complex vectors in $\Complex^n$ as input, so the neural network input is $\begin{bsmallmatrix} \re{\psi}\\ \im{\psi} \end{bsmallmatrix}\in \Real^{2n}$, and the output is a single real value. We use ReLU as the activation function, with a network depth of $3$ and layer width of $36$. With $\theta_k$ and $h_k$ parameterized in this way, $\eta_k$ is fully determined by $h_k$ and $\theta_k$ via \eqref{eqn::norm_condition}, so the drift and diffusion terms are fully specified by \eqref{eqn::ab_Cn::2}. As discussed in the main text, the function $\gamma_k$ does not appear to contribute to variance reduction, so we simply set $\gamma_k = 0$ a priori. For the example \eqref{eg::two_decay}, since $K = 2$, we in total trained $2 K = 4$ such real-valued functions, namely, $\theta_1, \theta_2, h_1, h_2$.\\

{\noindent \emph{Loss function and training details:}} The loss function is defined as
\begin{align*}
\sum_{j=0}^{\numtime} \text{Var}\Big(\Big\{\innerbig{\psi_{t_j}^{(k)}}{\obs \psi_{t_j}^{(k)}}\Big\}_{k=1}^{\numsample}\Big),
\end{align*}
where $\psi_{t_j}^{(k)}$ is the wave function at time $t_j = j \Delta t$ for the $k^{\text{th}}$ sample, and the variance is taken over all samples $1\le k\le \numsample$. The unraveling scheme is solved using the Euler-Maruyama method. Since the norm is theoretically preserved for this setup by \eqref{eqn::norm_condition}, we enforce normalization of the wave function directly in the numerical scheme to improve numerical stability.
The time span is $T = 2$, with a time step of $\Delta t = 0.1$ (so $\numtime = 20$), a sample size of $50$, the training epochs of $3000$, and an exponential-scheduled learning rate of $5\times 10^{-4}$ with decay $0.5$. Note that this time step is larger than that used for actual simulations, in order to reduce the computational cost. Our machine learning results confirm that DO-QSD (a training-free scheme) achieves comparable performance even when the empirical variance is optimized in this direct manner; see Figure~\ref{fig::eg1} and Figure~\ref{fig::eg1::var}.

\end{document}